\renewcommand {\a}{ \alpha }
\renewcommand{\b}{\beta}
\newcommand{\g}{\gamma}
\renewcommand{\L}{\Lambda}
\renewcommand{\t}{\theta}
\newcommand{\om}{\omega}
\newcommand{\Om}{\Omega}
\newcommand{\oq}{\ {\raise 7pt\hbox{${\scriptstyle\circ}$}}
\kern -7pt{
\hbox{$Q$}}}
\newcommand{\R}{ \mathbb R}
\newcommand{\N}{ \mathbb N}
\newcommand{\Rd}{ \mathbb R^d}
\newcommand {\GF}{\mathfrak F}
\newcommand {\GH}{\mathfrak H}
\newcommand {\GV}{\mathfrak V}
\newcommand {\GW}{\mathfrak W}
\newcommand {\GU}{\mathfrak U}
\newcommand {\GX}{\mathfrak X}
\newcommand {\ba}{\mathbf a}
\newcommand {\BD}{\mathbf D}
\newcommand {\BM}{\mathbf M}
\newcommand {\BS}{\mathbf S}
\newcommand {\bx}{\mathbf x}
\newcommand {\be}{\mathbf e}
\newcommand {\bk}{\mathbf k}
\newcommand {\by}{\mathbf y}
\newcommand {\bt}{\mathbf t}
\newcommand {\bs}{\mathbf s}
\newcommand {\bv}{\mathbf v}
\newcommand {\bn}{\mathbf n}
\newcommand {\bnu}{\boldsymbol\nu}
\newcommand {\bmu}{\boldsymbol\mu}
\newcommand {\bth}{\boldsymbol\theta}
\newcommand {\Bth}{\boldsymbol\Theta}
\newcommand {\boldeta}{\boldsymbol\eta}
\newcommand {\boldom}{\boldsymbol\om}
\newcommand {\bphi}{\boldsymbol\phi}
\newcommand {\bxi}{\boldsymbol\xi}
\newcommand {\BXi}{\boldsymbol\Xi}
\newcommand {\Bxi}{\boldsymbol\Xi}
\newcommand{\bchi}{\boldsymbol\chi}
\newcommand{\lu}{\langle}
\newcommand{\ru}{\rangle}
\newcommand{\CV}{\mathcal V}
\newcommand{\CX}{\mathcal X}
\newcommand{\CP}{\mathcal P}
\newcommand{\CA}{\mathcal A}
\newcommand{\CC}{\mathcal C}
\newcommand{\plainC}[1]{\textup{{\textsf{C}}}^{#1}}
\newcommand{\plainS}{\textup{{\textsf{S}}}}
\newcommand{\1}
{{\,\vrule depth3pt height9pt}{\vrule depth3pt height9pt}
{\vrule depth3pt height9pt}{\vrule depth3pt height9pt}\,}
\DeclareMathOperator{\op}{{Op}}
\newtheorem{thm}{Theorem}[section]
\newtheorem{cor}[thm]{Corollary}
\newtheorem{cla}[thm]{Claim}
\newtheorem{lem}[thm]{Lemma}
\newtheorem{prop}[thm]{Proposition}
\theoremstyle{definition}
\newtheorem{defn}[thm]{Definition}
\newtheorem*{remark}{Remark}
\newtheorem{rem}[thm]{Remark}
\numberwithin{equation}{section}
\newcommand{\bee}{\begin{equation}}
\newcommand{\ene}{\end{equation}}
\newcommand{\bees}{\begin{equation*}}
\newcommand{\enes}{\end{equation*}}
\newcommand{\bes}{\begin{split}}
\newcommand{\ens}{\end{split}}
\newcommand{\bet}{\begin{thm}}
\newcommand{\ent}{\end{thm}}
\newcommand{\bel}{\begin{lem}}
\newcommand{\enl}{\end{lem}}
\newcommand{\bec}{\begin{cor}}
\newcommand{\enc}{\end{cor}}
\newcommand{\becl}{\begin{cla}}
\newcommand{\encl}{\end{cla}}
\newcommand{\bep}{\begin{proof}}
\newcommand{\enp}{\end{proof}}
\newcommand{\ber}{\begin{rem}}
\newcommand{\enr}{\end{rem}}
\newcommand{\ep}{\varepsilon}
\newcommand{\la}{\lambda}
\newcommand{\La}{\Lambda}
\newcommand{\de}{\delta}
\newcommand{\al}{\alpha}
\newcommand{\Z}{\mathbb Z}
\newcommand {\BUps}{\boldsymbol\Upsilon}
\newcommand{\De}{\Delta}
\newcommand{\CF}{\mathcal F}
\def\square{\RIfM@\bgroup\else$\bgroup\aftergroup$\fi
  \vcenter{\hrule\hbox{\vrule\@height.6em\kern.6em\vrule}\hrule}\egroup}
\begin{document}

\hoffset -4pc

\title[Local Density of states]
{Complete asymptotic expansion of the spectral function of multidimensional almost-periodic Schr\"odinger operators}
\author[L. Parnovski \& R. Shterenberg]
{Leonid Parnovski \& Roman Shterenberg}
\address{Department of Mathematics\\ University College London\\
Gower Street\\ London\\ WC1E 6BT\\ UK}
\email{Leonid@math.ucl.ac.uk}
\address{Department of Mathematics\\ University of Alabama at Birmingham\\ 1300 University Blvd.\\
Birmingham AL 35294\\ USA}
\email{shterenb@math.uab.edu}

\keywords{Periodic operators, almost-periodic pseudodifferential operators, spectral function}
\subjclass[2000]{Primary 35P20, 47G30, 47A55; Secondary 81Q10}

\date{\today}


\begin{abstract}
We prove the complete asymptotic expansion of the spectral function (the integral kernel of the spectral projection) of a
Schr\"odinger operator $H=-\Delta+b$ acting in $\R^d$ when the potential $b$ is real and either
smooth periodic, or generic quasi-periodic (finite linear combination of exponentials), 
or belongs to a wide class of almost-periodic functions. 
\end{abstract}

\



\maketitle
\vskip 0.5cm

\renewcommand{\uparrow}{{\mathcal {L F}}}
\newcommand{\ssharp}{{\mathcal {L E}}}
\renewcommand{\natural}{{\mathcal {NR}}}
\renewcommand{\flat}{{\mathcal R}}
\renewcommand{\downarrow}{{\mathcal {S E}}}

\section{Introduction}

We consider the Schr\"odinger operator
\bee\label{eq:Sch}
H=-\Delta+b
\ene
acting in $\R^d$. The potential $b=b(\bx)$ is assumed to be real, smooth, and either periodic, or almost-periodic;
in the almost-periodic case we assume that all the derivatives of $b$ are almost-periodic as well. Let $E_{\la}=E_{\la}(H)$ be the spectral projection of $H$ and $e_{\la}(\bx,\by)=e_{\la}(H;\bx,\by)=e(\la;H;\bx,\by)$  
be its integral kernel (also called the {\it spectral function}). We put $N(\la;\bx)=N(\la;\bx;H):=e_{\la}(\bx,\bx)$ and call $N(\la;\bx)$ the {\it Local Density of States} (LDS) of $H$. The study of the asymptotic behaviour of the LDS (for much more general classes of operators) has been the subject of many papers, see e.g. \cite{AgKa,Ca,Gar,Ho,Le,Sa,Si}.

The Pastur-Shubin theorem implies that the integrated density of states (IDS) $N(\la;H)$ is the mean of the local density of states over the spatial variable:
\bee
N(\la)={\mathbf{M}}_{\bx}N(\la;\bx)=
\lim_{\L\to\infty}\frac{\int_{[-\L,\L]^d}N(\la;\bx)d\bx}{(2\L)^d}.
\ene
In our paper \cite{PaSh1}, we have proved that, subject to several assumptions, the IDS admits a complete asymptotic expansion:
\bee\label{eq:intr0}
N(\lambda)\sim \la^{d/2}\Bigl(C_d+\sum_{j=1}^\infty a_j\lambda^{-j}\Bigr),
\ene
meaning that for each $L\in\N$ one has
\bee\label{eq:intr1}
N(\lambda)=\la^{d/2}\Bigl(C_d+\sum_{j=1}^L a_j\lambda^{-j}\Bigr)+R_L(\lambda)
\ene
with $R_L(\lambda)=o(\la^{\frac{d}{2}-L})$. In those formulas, 
\bee
C_d=\frac{w_d}{(2\pi)^d} \ \text{and} \ w_d=\frac{\pi^{d/2}}{\Gamma(1+d/2)}
\ene
is a volume of the unit ball in $\R^d$; coefficients
$a_j$ are real numbers that depend on the potential $b$.
They can be calculated relatively easily using the heat kernel invariants (computed in \cite{HitPol}) and the results of \cite{KorPush}; they are equal to certain
integrals of the potential $b$ and its derivatives. Below, we give more details about the history of proving \eqref{eq:intr0}.

The first aim of our paper is to prove the `localised' version of \eqref{eq:intr0}:
\bee\label{eq:intrlocal}
N(\lambda;\bx)\sim \la^{d/2}\Bigl(C_d+\sum_{j=1}^\infty a_j(\bx)\lambda^{-j}\Bigr).
\ene 
According to \cite{HitPol} and \cite{KorPush}, if \eqref{eq:intrlocal} holds, we must have
\bee
a_j(\bx)=\frac{\sigma_j(\bx)}{(4\pi)^{d/2}\Gamma(\frac{d}{2}-j+1)},
\ene
where $\sigma_j(\bx)$ are local heat invariants given by
$$
\sigma_j(\bx)=\sum\limits_{k=0}^j\frac{(-1)^j\Gamma(j+d/2)}{4^k k! (k+j)! (j-k)!\Gamma(k+d/2+1)}H_\by^{k+j}(|\bx-\by|^{2k})\big|_{\by=\bx}.
$$
Here, $H_\by$ is our operator \eqref{eq:Sch} acting in variable $\by$. 
Moreover, we obviously have
\bee
a_j={\mathbf{M}}_{\bx}a_j(\bx). 
\ene
It is clear that \eqref{eq:intrlocal} (with remainder estimates being uniform in $\bx$) immediately implies \eqref{eq:intr0}, but the opposite is not true. Formula \eqref{eq:intrlocal} has been proved in the one-dimensional periodic case in \cite{ShuSche}. 

\ber
Suppose for the moment that $b$ belongs to a bigger class of potentials: $b$ is bounded together with all its derivatives; we denote the collection of all such potentials by $\mathrm{USB}(\R^d)$ (this stands for \emph{uniformly smoothly bounded}). Then the IDS of $H$ may be not well-defined, but the LDS still exists. We formulate two conjectures about the asymptotic behaviour of LDS in this wider class of operators. 

{\bf  Conjecture 1:} Asymptotic formula \eqref{eq:intrlocal} holds for Schr\"odinger operators with USB potentials.
 
{\bf  Conjecture 2:} Suppose, two potentials $b_1,b_2\in \mathrm{USB}(\R^d)$ coincide in a neighbourhood of $\bx$. Then 
\bee
N(\la,\bx;H_1)-N(\la,\bx;H_2)=O(\la^{-\infty}).
\ene

As \cite{HitPol} and \cite{PoSh} show, Conjecture 1 and Conjecture 2 are equivalent. To the best of our knowledge, the only situation (apart from the periodic and almost-periodic cases established in our paper) when Conjectures 1 and 2 have been proved is when $b$ has compact support, \cite{PoSh} and \cite{Va}. Unfortunately, it does not look likely that the method used in these papers can be extended to a bigger class of operators. Indeed, this method, if it works, allows one to obtain the complete asymptotic expansion not just for the spectral function $e$, but even for its derivative with respect to $\lambda$. Obviously, such an expansion cannot exist for an arbitrary  $\mathrm{USB}$ potential; 
most periodic potentials in $\R$ give the obvious counter-examples (because of the existence of infinitely many spectral gaps). 
It seems that Conjectures 1-2 are not known in full generality even in the one-dimensional case.
\enr

The second result of this paper is obtaining the information about the asymptotic behaviour of the spectral function off the diagonal. In the off-diagonal case we obtain the complete asymptotic expansion for the so called non-degenerate directions $\frac{\bx-\by}{|\bx-\by|}$ which form the set of full measure on the unit $(d-1)$-dimensional sphere (see Theorem \ref{th:main2} for the exact formulation); for such directions we prove that 
\bee\label{eq:int_main_thm2}
\begin{split}
e_{\la}(\bx,\by)&\sim 
\cos(\lambda^{1/2}|\bx-\by|)\sum\limits_{q=0}^{\infty}\acute a_q(\bx,\by)\lambda^{(d-1)/4-q/2}\cr & +\sin(\lambda^{1/2}|\bx-\by|)\sum\limits_{q=0}^{\infty}\grave a_q(\bx,\by)\lambda^{(d-1)/4-q/2}.
\end{split}
\ene
More precisely, in all cases (on and off-diagonal, both degenerate and non-degenerate) we have reduced the problem of finding the asymptotic expansion of the spectral function to computing certain rather complicated integrals. In the diagonal case these integrals can be computed with the brute force, whereas in the non-degenerate off-diagonal case these integrals can be computed (or rather approximated) using the stationary phase method. Computing these integrals in the degenerate off-diagonal case is technically too difficult a task; we may return to it in a further publication. In the one-dimensional periodic case \eqref{eq:int_main_thm2} was also obtained in \cite{ShuSche}. Unfortunately, unlike in the diagonal case, we cannot say 
much about the coefficients $\acute a_q(\bx,\by)$, $\grave a_q(\bx,\by)$, since the results of \cite{HitPol} are not known off the diagonal. The results of \cite{Mol} show that 
these coefficients depend only on the behaviour of the potential in the neighbourhood of the interval joining $\bx$ and $\by$.

There is a long history of results related to proving the expansions \eqref{eq:intr0}-\eqref{eq:intr1} for the IDS. In the one-dimensional case the complete expansion \eqref{eq:intr0} was obtained in \cite{ShuSche} for periodic potentials, and in \cite{Sav} in the almost-periodic case. 
For higher dimensional periodic operators, the important steps were:  \cite{HelMoh}, \cite{Kar1},
\cite{Kar}, \cite{ParSob}, \cite{Skr}, \cite{Sob1}. At last, the complete expansion \eqref{eq:intr0} was obtained in \cite{ParSht} for $d=2$ and in \cite{PaSh1} for arbitrary $d$. 
Finally, in the multidimensional almost-periodic case,
formula \eqref{eq:intr1} was known only with $L=0$ and $R(\la)=O(\la^{\frac{d-2}{2}})$, see \cite{Shu}, until \eqref{eq:intr0}  
was proved in \cite{PaSh1}. 

On the other hand, if we talk about the asymptotic expansions \eqref{eq:intrlocal} and \eqref{eq:int_main_thm2} of the spectral function, then, with the exception of the already mentioned paper \cite{ShuSche} where these formulas were obtained in the case $d=1$ and $b$ periodic, the only other results, to the best of our knowledge, were one-term asymptotics of the LDS and zero-term asymptotics (i.e., optimal estimates, without the first term) off the diagonal. 

Now let us discuss the method we employ to prove \eqref{eq:intrlocal} and \eqref{eq:int_main_thm2} and the additional difficulties we have encountered compared with the proof of \eqref{eq:intr0}. 
Let us assume that $\la$ belongs to a spectral interval $[\la_0,2\la_0]$ and obtain the asymptotic expansion there; it is a relatively simple task (explained in Section 3) how to `glue' asymptotic expansions obtained in different spectral intervals.  

The first step, as  in \cite{PaSh1}, is to perform the gauge transform to $H$ (see also \cite{Sob,Sob1,ParSob}). This results in obtaining two operators, $H_1$ and $H_2$ such that: 

1. $H_1$ is unitary equivalent to $H$: $H_1=UHU^{-1}$, with an explicit (although complicated) formula for $U$;

2. $H_1$ and $H_2$ are close to each other: $||H_2-H_1||\le\la^{-N}$, where $N$ is arbitrarily large (but fixed) number. In fact, we will need even better `closeness': we will show that $||(H_2-H_1)(-\De+I)^s||\le\la^{-N}$, where both $s$ and $N$ are large (but fixed). 

3. Finally, $H_2$ is `almost diagonal' in the interval $[\la_0,2\la_0]$. 
This means that for a large portion of values of the dual variable $\bxi$ inside the annulus $\{\bxi\in\R^d, \la_0\le|\bxi|^2\le 2\la_0\}$, the symbol of $H_2$ has constant coefficients (so $H_2$ has no off-diagonal terms for such $\bxi$). These `good' values of $\bxi$ belong to the so called {\it non-resonant region}. In the other, so called {\it resonant regions}, many (but not all) off-diagonal terms of $H_2$ are also zeros. As a result, $H_2$ has many invariant subspaces. 

The next step is to compare the spectral functions of $H$ and $H_2$. Let us recall how this step was done when we were studying the IDS in \cite{PaSh1}. First, we proved that $N(\la;H)=N(\la;H_1)$ using the representation of the IDS as the von Neumann trace of $E_{\la}$ and the basic properties of this trace. Then we have used the fact that the IDS is monotone with respect to the operator and therefore, if $||H_2-H_1||\le\la^{-N}$, this implies that 
\bee
N(\la-\la^{-N};H_2)\le N(\la;H_1)\le  N(\la+\la^{-N};H_2). 
\ene

The first step (from $H$ to $H_1$) is rather simpler when we study the spectral function: we use the fact that (at least formally) we have  
\bee\label{eintr}
\bes
&e(\la;H;\bx_0,\by_0)=(E_{\la}(H)\de_{\bx_0},\de_{\by_0})\\
&=
(U^{-1}E_{\la}(H_1)U\de_{\bx_0},\de_{\by_0})=(E_{\la}(H_1)U\de_{\bx_0},U\de_{\by_0}).
\end{split}
\ene
It is a much more serious problem to switch from $H_1$ to $H_2$. In general, it is obviously not true that if we change the operator by something small, then the spectral projection is changed by something small. This statement, however, becomes true if we consider the change of the spectral projections in a certain weak sense, see Lemma \ref{lem:2}. This Lemma is probably the first of two important new ideas 
in our paper. Lemma \ref{L2norm} then shows that despite the fact that the delta-function does not belong to $L_2$, the function $E_{\la}\de_{\bx_0}$ is inside $L_2$ (with the control of its norm) which makes legal most of the formal computations. 

The next step is to compute 
\bee\label{eq:17}
(E_{\la}(H_2)U\de_{\bx_0},U\de_{\by_0}).
\ene
Here, we use the trick ideologically similar to formula (10.18) in \cite{PaSh1} when, in order to calculate a certain object for a real analytic family of operators, we extend this family to the complex plane,  express this object as a contour integral and then, after a chain of manipulations (expanding our integral in geometric series and using the Cauchy integral formula), we return back to the real axis having expressed the difficult object in a convenient and explicit form. This time we need to express the spectral projection of a real analytic family of operators. We again go into the complex plane, write the spectral projection as the Riesz integral and then change the variables so that instead of integrating against the spectral parameter, we are integrating against the parameter of the family. Afterwards, we similarly return back to the real axis and express \eqref{eq:17} in the explicit form. This is done in  
Lemmas \ref{aux} and \ref{spectralint} and formula \eqref{eq:Cauchy}. This is the second important new idea of our paper. 

After these steps, we have reduced the problem to computing certain explicit (though complicated) integrals. In the diagonal case, these integrals are precisely of the form that was already computed (essentially by brute force) in \cite{PaSh1}. Off the diagonal, the integrals become too complicated to compute by hand, but we can use instead the stationary phase method to compute them. This is where we use the fact that the direction 
$\bx-\by$ is non-degenerate: otherwise, even the stationary phase integrals become too involved. 

As it has already been mentioned, many constructions and results needed for our proof are either identical, or similar to corresponding statements from \cite{PaSh1}. This refers, in particular, to most of Sections 5 and 6. In order to keep the size of our paper reasonable, but make it self-contained, we have been using the following convention: we write in detail all the definitions and statements from \cite{PaSh1} necessary for our proof. If the proof of a certain statement is identical (or essentially identical) to the proof of the corresponding statement of \cite{PaSh1}, we omit it. However, if the proof requires substantial changes (like, e.g., the proof of Lemma \ref{th:main2withconstant}), we write it here completely. Still, we believe that it would help to understand our paper better if the reader reads \cite{PaSh1} first.

Another convenient convention that we were already using in \cite{PaSh1} is this. Let $A$ be an  elliptic pseudo-differential operator
with almost-periodic coefficients. Usually, we are assuming that $A$ acts in $L_2(\R^d)$. However, we can consider actions of $A$ (via the same
Fourier integral operator formula) in the Besicovitch space $B_2(\R^d)$. The space $B_2(\R^d)$ is the space of all formal
sums
\bees
\sum_{j=1}^\infty c_{j}\be_{\bth_j}(\bx),
\enes
where
\bee
\be_{\bth}(\bx):=e^{i\langle\bth,\bx\rangle}
\ene
and $\sum_{j=1}^\infty |c_{j}|^2<+\infty$. It is known (see \cite{Shu0}) that the spectra of $A$ acting in $L_2(\R^d)$ and $B_2(\R^d)$ are the same, although the types of
those spectra can be entirely different. It is very convenient, when working with the gauge transform constructions, to assume that all the operators involved
act in $B_2(\R^d)$, although in the end we will return to operators acting in $L_2(\R^d)$. This trick (working with operators acting in $B_2(\R^d)$)
is similar to working with fibre operators $A(\bk)$ in the periodic case in a sense that we can freely consider the action of an operator on one, or finitely many,
exponentials, without caring that these exponentials do not belong to our function space. In most of the situations it will be clear from the context which is the space we work in, but sometimes we will indicate this by writing  $A^L$ (resp. $A^B$) for actions in $L_2(\R^d)$ (resp. in $B_2(\R^d)$). 

During our computations, we will obtain some `extra' asymptotic terms that are absent in the final expansion (compare e.g. \eqref{eq:main_thm1} with \eqref{eq:main_cor1}). The way we get rid of these extra terms is different in the on and off-diagonal cases. On the diagonal we use the a priori form of the asymptotic expansion given by the asymptotics of the heat kernel computed in \cite{HitPol} and \cite{HitPol1}. 
Off the diagonal, we use the Seeley type formula for the meromorphic extension of the complex powers of $H$ (although we could have used 
the heat kernel extension obtained in \cite{Mol} and \cite{Kan}).

The plan of the paper is as follows. In Section 2, we give necessary definitions and formulate the main results. In Section 3, we discuss how to `glue' asymptotic expansions obtained in different intervals of the spectral parameter and how to get rid of the `extra' asymptotic terms. In Section 4, we prove several auxiliary statements (since these statements are quite crucial for our method, we have decided to prove them in a special Section rather than to move their proofs to an Appendix). In Section 5, we introduce the resonance regions and the coordinates in these 
regions. In Section 6, we discuss the classes of pseudo-differential operators we will work in and introduce the method of the gauge transform. Finally, in Section 7 we finish the proofs of the main statements.

\subsection*{Acknowledgments}
We are grateful to Iosif Polterovich for useful discussions. 
The research of the first author was partially supported by the EPSRC grant EP/J016829/1.

\section{Notation and Main Results}

Since our potential $b$ is almost-periodic, it has the Fourier series
\bee\label{eq:potential}
b(\bx)\sim\sum_{\bth\in\Bth}\hat b({\bth})\be_{\bth}(\bx),
\ene
where $\Bth$ is a (countable) set of frequencies. 
\ber
Although for general almost-periodic functions the series \eqref{eq:potential} does not need to be convergent, the assumptions we impose on $b$ later will imply that \eqref{eq:potential} is, in fact, an equality. If $b$ is periodic, then $\Bth\subset\Gamma^{\dagger}$, where
$\Gamma^{\dagger}$ is the lattice dual to the lattice $\Gamma$ of periods of $b$. 
\enr
Without loss of generality we assume that $\Bth$ spans $\R^d$ and contains $0$; we also put 
\bee\label{eq:algebraicsum}
\Bth_k:=\Bth+\Bth+\dots+\Bth 
\ene
(algebraic sum taken $k$ times) and $\Bth_{\infty}:=\cup_k\Bth_k=Z(\Bth)$, where for a set $S\subset \R^d$ by $Z(S)$ we denote the set of all finite linear combinations of elements in
$S$ with integer coefficients. The set $\Bth_\infty$ is countable and non-discrete (unless $b$ is periodic, in which case $\Bth_\infty=\Gamma^{\dagger}$).
The first condition we impose on the potential is:

{\bf Condition A}. Suppose that $\bth_1,\dots,\bth_d\in \Bth_\infty$. Then $Z(\bth_1,\dots,\bth_d)$ is discrete.

It is easy to see that this condition can be reformulated like this:
suppose, $\bth_1,\dots,\bth_d\in \Bth_\infty$.
Then
either $\{\bth_j\}$ are linearly independent, or $\sum_{j=1}^d n_j\bth_j=0$, where $n_j\in\Z$ and not all
$n_j$ are zeros. This reformulation shows that Condition A is generic: indeed, if we are choosing frequencies
of $b$ one after the other, then on each step we have to avoid choosing a new frequency from a countable set of
hyperplanes, and this is obviously a generic restriction. Yet another equivalent reformulation of this condition is as follows: Let $\GV$ be any proper linear subspace of $\R^d$. Denote
\bee\label{eq:potentialGV}
b_{\GV}(\bx):=\sum_{\bth\in\Bth\cap\GV}\hat b({\bth})\be_{\bth}(\bx).
\ene
Then $b_{\GV}$ is periodic. 
Condition A is clearly always satisfied for periodic
potentials, but it becomes meaningful for quasi-periodic potentials. If $d=2$, this condition simply means that any two collinear frequencies are commensurate. 

Our main result will hold in the cases when $b$ is smooth and either periodic or quasi-periodic satisfying condition A. The rest of the conditions on the potential are required only when it is `truly' almost-periodic. 
These extra conditions state that we have a tight control over the approximations of $b$ by quasi-periodic functions. 
In the proof we are going to work
with quasi-periodic approximations of $b$, and we need these conditions to make sure that
all estimates in the proof are uniform with respect to these approximations.

{\bf Condition B}.
Let $k$ be an arbitrary fixed natural number. Then for each sufficiently large real number $\rho$
there is a finite set $\Bth(k;\rho)\subset(\Bth\cap B(\rho^{1/k}))$ (where $B(r)$ is a ball of radius $r$ centered at
$0$) and a `cut-off' potential
\bee\label{eq:condB1}
b_{(k;\rho)}(\bx):=\sum_{\bth\in\Bth(k;\rho)}\hat b'({\bth})\be_{\bth}(\bx)
\ene
which satisfies
\bee\label{eq:condB2}
||b-b_{(k;\rho)}||_{\infty}<\rho^{-k}.
\ene 

The next condition we need to impose is a version of the Diophantine condition on the frequencies of $b$. First, we need
some definitions. We fix a natural number $\tilde k$ (the choice of $\tilde k$ will be determined later by how many terms in \eqref{eq:intr1}
we want to obtain) and denote $\tilde\Bth:=[\Bth(k;\rho)]_{\tilde k}$ 
(see \eqref{eq:algebraicsum} for the notation) and 
$\tilde\Bth':=\tilde\Bth\setminus\{0\}$.
We say that $\GV$ is a quasi-lattice subspace of dimension $m$, if $\GV$ is a linear
span of $m$ linear independent vectors $\bth_1,\dots,\bth_m$ with $\bth_j\in\tilde\Bth\ \forall j$. Obviously, zero
space (which we will denote by $\GX$)
is a quasi-lattice subspace of dimension $0$ and $\R^d$ is a quasi-lattice subspace of dimension $d$.
We denote by $\CV_m$ the collection of all quasi-lattice subspaces of dimension $m$ and put
$\CV:=\cup_m\CV_m$.
If $\bxi\in\R^d$
and $\GV$ is a linear subspace of $\R^d$, we denote by $\bxi_{\GV}$ the orthogonal projection of $\bxi$
onto $\GV$, and put $\GV^\perp$ to be an orthogonal complement of $\GV$, so that $\bxi_{\GV^\perp}=\bxi-\bxi_{\GV}$.
Let $\GV,\GU\in\CV$. We say that these subspaces are {\it strongly distinct}, if neither of them is a
subspace of the other one. This condition is equivalent to stating that if we put $\GW:=\GV\cap\GU$, then
$\dim \GW$ is strictly less than dimensions of $\GV$ and $\GU$. We put $\phi=\phi(\GV,\GU)\in [0,\pi/2]$
to be the angle between them, i.e. the angle between $\GV\ominus\GW$ and $\GU\ominus\GW$, where $\GV\ominus\GW$
is the orthogonal complement of $\GW$ in $\GV$. This angle is positive iff $\GV$ and $\GW$ are strongly distinct.
We put $s=s(\rho)=s(\tilde\Bth):=\inf\sin(\phi(\GV,\GU))$, where infimum is over all strongly distinct pairs of subspaces from $\CV$,
$R=R(\rho):=\sup_{\bth\in\tilde\Bth}|\bth|$, and $r=r(\rho):=\inf_{\bth\in\tilde{\Bth}'}|\bth|$. Obviously, $R(\rho)\ll \rho^{1/k}$ (where the
implied constant can depend on $k$ and $\tilde k$).

{\bf Condition C}. For each fixed $k$ and $\tilde k$ the sets $\Bth(k;\rho)$ satisfying \eqref{eq:condB1} and \eqref{eq:condB2}
can be chosen in such a way that for sufficiently large $\rho$ we have
\bee\label{eq:condC1}
s(\rho)\ge\rho^{-1/k}
\ene
and
\bee\label{eq:condC2}
r(\rho)\ge\rho^{-1/k},
\ene
where the implied constant (i.e. how large should $\rho$ be) can depend on $k$ and $\tilde k$.

\ber
One can understand Conditions B and C in the following way. These conditions specify how quickly the Fourier coefficients of $b$ should decay, given the Diophantine properties of the frequencies. 
\enr

Now we can formulate our first theorem.

\bet\label{th:main1}
Let $H$ be an operator \eqref{eq:Sch} 
with smooth real potential $b$ which is either periodic, or quasi-periodic satisfying Condition {\rm A}, or almost-periodic 
satisfying Conditions {\rm A,B,} and {\rm C}.
Then for each $L\in\N$ we have (uniformly in $\bx\in\R^d$):
\bee\label{eq:main_thm0}
N(\la;\bx)=\la^{d/2}\left(C_d+\sum\limits_{j=1}^{L}a_j(\bx)\la^{-j}+o(\la^{-L})\right)
\ene
as $\la\to\infty$.
\ent
\ber
Following \cite{HitPol}, \cite{HitPol1}, and \cite{KorPush}, it is straightforward to compute the coefficients $a_j$. For
example, we have
\bees
a_1(\bx)=-\frac{d w_d}{2(2\pi)^d}b(\bx)
\enes
and
\bees
a_2(\bx)=
\frac{d(d-2) w_d}{24(2\pi)^d}(3b^2(\bx)-\Delta\,b(\bx)).
\enes
\enr
Our second result concerns the off-diagonal behaviour of the spectral function. 
\bet\label{th:main2}
Let $H$ be an operator satisfying all the conditions of the previous Theorem. Suppose that the direction $\frac{\bx-\by}{|\bx-\by|}$ is not orthogonal to any of the vectors in $\Bth_\infty\setminus\{{\bf 0}\}$. Then for each $L\in\N$ we have:
\bee\label{eq:main_thm2}
\begin{split}
e_{\la}(\bx,\by)&=
\cos(\lambda^{1/2}|\bx-\by|)\sum\limits_{q=0}^{2L}\acute a_q(\bx,\by)\lambda^{(d-1)/4-q/2}\cr & +\sin(\lambda^{1/2}|\bx-\by|)\sum\limits_{q=0}^{2L}\grave a_q(\bx,\by)\lambda^{(d-1)/4-q/2}+o(\lambda^{(d-1)/4-L}),
\end{split}
\ene
as $\la\to\infty$. The asymptotic expansion is uniform along every non-degenerate  direction when $|\bx-\by|$ is bounded and separated away from $0$. 
\ent

\begin{remark} 1. Obviously, the coefficients $\acute a_q(\bx,\by),\ \grave a_q(\bx,\by)$ are real-valued  but unlike the on-diagonal case, here we don't know them explicitly. It is nevertheless possible to compute first few coefficients using our constructions. In particular, $\acute a_0$ and $\grave a_0$ are (as expected) the same as for the free operator $-\Delta$:
\bee\label{free}
e_\la(\bx,\by)=\frac{2}{(2\pi|\bx-\by|)^{(d+1)/2}}\la^{(d-1)/4}\sin\left(\la^{1/2}|\bx-\by|-\frac{\pi(d-1)}{4}\right)(1+O(\la^{-1/2})).
\ene
Moreover, \eqref{free} holds for all $\bx\not=\by$ including degenerate directions.

2. Our set of non-degenerate directions has full measure but in general is not open. Still, as can be seen from the proof, for every fixed $L$ formula \eqref{eq:main_thm2} holds for all directions not orthogonal to any of the vectors in $\Bth_{\tilde k}\setminus\{{\bf 0}\}$, $\tilde k=\tilde k(L)$, the latter set being just finite. Corresponding partial expansion is uniform in any compact set within these directions and outside of $|\bx-\by|=0$, coefficients $\acute a_q(\bx,\by),\ \grave a_q(\bx,\by)$ being smooth.


\end{remark}

As we have mentioned earlier, certain parts of the proof are virtually identical to corresponding parts of \cite{PaSh1} and will be omitted. In particular, at the end of Section 3 of \cite{PaSh1} it is explained how to obtain the asymptotic formula for the IDS in the almost-periodic situation assuming we can obtain it for quasi-periodic potentials. This explanation works in the case of LDS (and the spectral function off the diagonal) as well. Therefore, we will prove our results only for quasi-periodic potentials  and from now on we assume that $b$ has finitely many frequencies and, thus, that $\Bth(k;\rho)=\Bth$.

In this paper, by
$C$ or $c$ we denote positive constants,
The exact value of which can be different each time they
occur in the text,
possibly even each time they occur in the same formula. On the other hand, the constants which are labeled (like $C_1$, $c_3$, etc)
have their values being fixed throughout the text.
Given two positive functions $f$ and $g$,
we say that $f\gg g$, or $g\ll
f$, or $g=O(f)$ if the ratio $\frac{g}{f}$ is bounded. We say
$f\asymp g$ if $f\gg g$ and $f\ll g$.

\section{More notation and auxiliary results}

In this section, we start explaining our method. Let us put $\rho:=\sqrt{\la}$. The first result of our paper (about the LDS) is a consequence of the following theorem:

\bet\label{main_thm}
For each $L\in\N$ we have (uniformly in $\bx\in\R^d$):
\bee\label{eq:main_thm1}
N(\rho^2;\bx)=C_d\rho^d+\sum_{p=0}^{d}\sum_{j=-d+1}^{L}a_{j,p}(\bx)\rho^{-j}(\ln\rho)^p
+o(\rho^{-L})
\ene
as $\rho\to\infty$. 
\ent
Once the theorem is proved, it immediately implies
\bec
For each $L\in\N$ we have (uniformly in $\bx\in\R^d$):
\bee\label{eq:main_cor1}
N(\la;\bx)=\la^{d/2}\left(C_d+\sum\limits_{j=1}^{L}a_j(\bx)\la^{-j}+o(\la^{-L})\right)
\ene
as $\la\to\infty$.
\enc
\bep
The proof is the same as the proof of Corollary 3.2 from \cite{PaSh1}. 
\enp

Next we choose sufficiently large $\rho_0>1$ (to be fixed later on) and put $\rho_n=2\rho_{n-1}=2^n\rho_0$,
$\la_n:=\rho_n^2$;
we also define the interval
$I_n=[\rho_n,4\rho_n]$.
The proof of Theorem \ref{main_thm} will be based on the following lemma:

\bel\label{main_lem}
For each $M\in\N$ and $\rho\in I_n$ we have:
\bee\label{eq:main_lem1}
N(\rho^2;\bx)=C_d\rho^d+\sum_{p=0}^{d}
\sum_{j=-d+1}^{6M}a_{j,p;n}(\bx)\rho^{-j}(\ln\rho)^p
+O(\rho_n^{-M}).
\ene
Here, $a_{j,p;n}(\cdot):\R^d\to\R$ are some functions depending on $j,\ p$ and $n$ (and $M$) satisfying
\bee\label{eq:main_lem2}
a_{j,p;n}(\bx)=O(\rho_n^{(2j/3)+d+1}).
\ene
The constants in the $O$-terms do not depend on $n$ or $\bx$ (but they may depend on $M$). 
\enl
\ber\label{rem:new1}
Note that \eqref{eq:main_lem1} is not a `proper' asymptotic formula, since the coefficients
$a_{j,p;n}(\bx)$ are allowed to grow with $n$ (and, therefore, with $\rho$).
\enr
In Section 3 of \cite{PaSh1} it is explained, how to prove 
Theorem \ref{main_thm} assuming that Lemma \ref{main_lem} is established (see also the details for the off-diagonal case below). Therefore, what we have to do is to prove Lemma \ref{main_lem}.   

For the off-diagonal case the technical result which we prove is the following.
\bel\label{main_lemoff}
Let the direction $\frac{\bx-\by}{|\bx-\by|}$ be not orthogonal to any of the vectors in $\Bth_\infty\setminus\{{\bf 0}\}$. Then  for each $M\in\N$ and $\rho\in I_n$ we have:
\begin{equation}\label{eq:main_lemoff}
\begin{split}&
e_{\rho^2}(\bx,\by)=
\cos(\rho|\bx-\by|)\sum\limits_{p=-d+1}^{4M}\hat a(p;n)(\bx,\by)\rho^{-p-(d-1)/2}+\cr & \sin(\rho|\bx-\by|)\sum\limits_{p=-d+1}^{4M}\check a(p;n)(\bx,\by)\rho^{-p-(d-1)/2}+A_0(n)(\bx,\by)+O(\rho_n^{-M}).
\end{split}
\end{equation}
Here, $\hat a(p;n)(\cdot,\cdot),\ \check a(p;n)(\cdot,\cdot),\ A_0(n)(\cdot,\cdot):\,\R^d\times\R^d\to{\mathbb R}$ are some functions depending on $p$ (and $M$) satisfying
\bee\label{eq:main_lemoff1}
|\hat a(p;n)(\bx,\by)| +|\check a(p;n)(\bx,\by)|=O(\rho_n^{p/2+d/2}),\ \ \ \ \ |A_0(n)(\bx,\by)|=O(\rho_n^{d}).
\ene
The constants in the $O$-terms do not depend on $n$ (though they may depend on $M$). They are uniform along every non-degenerate direction when  $|\bx-\by|\asymp 1$. 
\enl

Let us now prove Theorem~\ref{th:main2} assuming Lemma~\ref{main_lemoff} has been proved. First, we obtain expansion \eqref{eq:main_thm2} with the extra constant term and then prove that this constant is, in fact, zero. 

\bel\label{th:main2withconstant}
Suppose the statement of Lemma~\ref{main_lemoff} holds. Then for each $L\in\N$ we have:
\bee\label{eq:main_thm2withconstant}
\begin{split}
&e_{\la}(\bx,\by)=
\cos(\lambda^{1/2}|\bx-\by|)\sum\limits_{q=0}^{2L}\acute a_q(\bx,\by)\lambda^{(d-1)/4-q/2}\cr & +\sin(\lambda^{1/2}|\bx-\by|)\sum\limits_{q=0}^{2L}\grave a_q(\bx,\by)\lambda^{(d-1)/4-q/2}+A_0(\bx,\by)+o(\lambda^{(d-1)/4-L}),
\end{split}
\ene
as $\la\to\infty$. The asymptotic expansion is uniform along every non-degenerate  direction when $|\bx-\by|\asymp 1$. 
\enl

\bep

The proof is similar to the derivation of Theorem \ref{main_thm} from Lemma \ref{main_lem}. However, this time the proof is rather more involved  than the corresponding proof in \cite{PaSh1}, and therefore we write it here in detail. 
Let $M$ be fixed. Denote
\begin{equation}\label{lemoff}
\begin{split}&
\tilde e_n(\rho^2;\bx,\by):=A_0(n)(\bx,\by)+
\cos(\rho|\bx-\by|)\sum\limits_{p=-d+1}^{4M}\hat a(p;n)(\bx,\by)\rho^{-p-(d-1)/2}\\
&+\sin(\rho|\bx-\by|)\sum\limits_{p=-d+1}^{4M}\check a(p;n)(\bx,\by)\rho^{-p-(d-1)/2}.
\end{split}
\end{equation}
Then whenever $\rho\in J_n:=I_{n-1}\cap I_n=[\rho_n,2\rho_n]
$, we have:
\bee
\begin{split}&
\tilde e_n(\rho^2;\bx,\by)-\tilde e_{n-1}(\rho^2;\bx,\by)=\tilde t(n)+\cr &\cos(\rho|\bx-\by|)\sum\limits_{j=-d+1}^{4M}\hat t_j(n)\rho^{-j-(d-1)/2}+\sin(\rho|\bx-\by|)\sum\limits_{j=-d+1}^{4M}\check t_j(n)\rho^{-j-(d-1)/2},
\end{split}
\ene
where \bee\tilde t(n):=A_0(n)-A_0(n-1),\ \ \ 
\hat t_j(n):=\hat a(j;n)-\hat a(j;n-1),\ \ \  \check t_j(n):=\check a(j;n)-\check a(j;n-1).
\ene
On the other hand, since for $\rho\in J_n$ we have \eqref{eq:main_lemoff} for both $n$ and $n-1$, this implies 
\bee
\bes
\tilde t(n)&+\cos(\rho|\bx-\by|)\sum\limits_{j=-d+1}^{4M}\hat t_j(n)\rho^{-j-(d-1)/2}\\
&
+\sin(\rho|\bx-\by|)\sum\limits_{j=-d+1}^{4M}\check t_j(n)\rho^{-j-(d-1)/2}=O(\rho_n^{-M}).
\end{split}
\ene
\becl
For each $j=-d+1,\dots,4M$ we have:
\bee
\tilde t(n)=O(\rho_n^{-M}), \ \ \hat t_j(n)=O(\rho_n^{j+(d-1)/2-M}), \ \  \check t_j(n)=O(\rho_n^{j+(d-1)/2-M}).
\ene
\encl
\bep
Put $$s:=\rho\rho_n^{-1},\ \ \tilde\tau(n):=\tilde t(n)\rho_n^{M},\ \ \hat \tau_j(n):=\hat t_j(n)\rho_n^{M-j-(d-1)/2},\ \ 
\check \tau_j(n):=\check t_j(n)\rho_n^{M-j-(d-1)/2}.$$ Then
\bee\label{Cramer}
\begin{split}&
P(s):=\tilde\tau(n)+\cr &\cos(s\rho_n|\bx-\by|)\sum\limits_{j=-d+1}^{4M}\hat \tau_j(n)s^{-j-(d-1)/2}+\sin(s\rho_n|\bx-\by|)\sum\limits_{j=-d+1}^{4M}\check \tau_j(n)s^{-j-(d-1)/2}=O(1)
\end{split}
\ene
whenever $s\in [1,2]$. 
Now, to show the estimates on coefficients we choose $8M+2d-1$ points in a special way. We put
$$
s_l:=\frac{2\pi}{\rho_n|\bx-\by|}\left(\left[\frac{\rho_n|\bx-\by|}{2\pi}\right]+l\left[\frac{\rho_n|\bx-\by|}{2\pi\cdot 5M}\right]\right),\ \ \ l=1,\dots,4M+d-1,
$$
so that $\sin(s_l\rho_n|\bx-\by|)=0$ and $\cos(s_l\rho_n|\bx-\by|)=1$. We also put 
$$
s'_{l}:=\frac{2\pi}{\rho_n|\bx-\by|}\left(\left[\frac{\rho_n|\bx-\by|}{2\pi}\right]+l\left[\frac{\rho_n|\bx-\by|}{2\pi\cdot  5M}\right]\right)+\frac{\pi}{2\rho_n|\bx-\by|},\ \ \ l=1,\dots,4M+d-1,
$$
so that $\sin(s'_l\rho_n|\bx-\by|)=1$ and $\cos(s'_l\rho_n|\bx-\by|)=0$. Finally, we put
$$
\tilde s:=\frac{2\pi}{\rho_n|\bx-\by|}\left(\left[\frac{\rho_n|\bx-\by|}{2\pi}\right]+(4M+d)\left[\frac{\rho_n|\bx-\by|}{2\pi\cdot 5M}\right]\right)
$$
if $d$ is even (so that there is no $\sin(s\rho_n|\bx-\by|)s^0$ present in \eqref{Cramer}) and
$$
\tilde s:=\frac{2\pi}{\rho_n|\bx-\by|}\left(\left[\frac{\rho_n|\bx-\by|}{2\pi}\right]+(4M+d)\left[\frac{\rho_n|\bx-\by|}{2\pi\cdot 5M}\right]\right)+\frac{\pi}{4\rho_n|\bx-\by|}
$$
if $d$ is odd. We also notice that, assuming $\rho_n$ is sufficiently large, we have $s_{l+1}-s_l\asymp M^{-1}$, $s'_{l+1}-s'_l\asymp M^{-1}$ and $\tilde s-s_{4M+d-1}\asymp M^{-1}$ uniformly in $n$ and $|\bx-\by|\asymp 1$.

{\it Even $d$.} First, we use the points $\{s_l\},\ \tilde s$. Then \eqref{Cramer} and the Cramer's Rule imply that for each $j$ the values $\hat\tau_j(n)$ and $\tilde\tau(n)$ are fractions with a bounded expression in the numerator and a uniform non-zero number in the denominator (the denominator is a Vandermonde determinant). Therefore,
$\hat\tau_j(n)=O(1)$ and $\tilde\tau(n)=O(1)$. Next, we use the points $\{s'_l\}$. Then the estimate for $\tilde\tau(n)$, \eqref{Cramer} and the Cramer's Rule again show that $\check\tau_j(n)=O(1)$. 

{\it Odd $d$.} Again, we use the points $\{s_l\}$ and then the points $\{s'_l\}$. As above, we see that 
\begin{equation}\label{odd1}
\hat\tau_j(n)=O(1),\ \ \ \check\tau_j(n)=O(1)\ \ \  \hbox{for}\  j\not=-(d-1)/2;
\end{equation}
and 
\bee\label{odd2}
\hat\tau_{-(d-1)/2}(n)+\tilde\tau(n)=O(1),\ \ \ \check\tau_{-(d-1)/2}(n)+\tilde\tau(n)=O(1).
\ene
Now, we use the point $\tilde s$ together with \eqref{odd1}. We have 
\bee\label{odd3}
\frac{1}{\sqrt 2}\hat\tau_{-(d-1)/2}(n)+\tilde\tau(n)=O(1),\ \ \ \frac{1}{\sqrt 2}\check\tau_{-(d-1)/2}(n)+\tilde\tau(n)=O(1).
\ene
This and \eqref{odd2} give $\hat\tau_{-(d-1)/2}(n)=O(1)\ $, $\check\tau_{-(d-1)/2}(n)=O(1)\ $, $\tilde\tau(n)=O(1)$.

This shows that $\tilde t(n)=O(\rho_n^{-M})$, $\hat t_j(n)=O(\rho_n^{j+(d-1)/2-M})$ and $\check t_j(n)=O(\rho_n^{j+(d-1)/2-M})$ as claimed.
\enp

Thus, for $j<M-(d-1)/2$, the series $\sum_{m=0}^\infty \hat t_j(m)$ is absolutely convergent; moreover, for such
$j$ we have:
\bee
\begin{split}& 
\hat a(j,n)(\bx,\by)=\hat a(j,0)(\bx,\by)+\sum_{m=1}^n \hat t_j(m)=\hat a(j,0)(\bx,\by)+\sum_{m=1}^\infty \hat t_j(m)+O(\rho_n^{j+(d-1)/2-M})=:\cr & \hat a(j)(\bx,\by)+O(\rho_n^{j+(d-1)/2-M}),
\end{split}
\ene
where we have denoted $\hat a(j)(\bx,\by):=\hat a(j,0)(\bx,\by)+\sum_{m=1}^\infty \hat t_j(m)$. Similarly, for $j<M-(d-1)/2$ we have
\bee
\begin{split}& 
\check a(j,n)(\bx,\by)=\check a(j,0)(\bx,\by)+\sum_{m=1}^n \check t_j(m)=\check a(j,0)(\bx,\by)+\sum_{m=1}^\infty \check t_j(m)+O(\rho_n^{j+(d-1)/2-M})=:\cr & \check a(j)(\bx,\by)+O(\rho_n^{j+(d-1)/2-M}),
\end{split}
\ene
where we have denoted $\check a(j)(\bx,\by):=\check a(j,0)(\bx,\by)+\sum_{m=1}^\infty \check t_j(m)$. Finally, 
\bee
\begin{split}& 
A_0(n)(\bx,\by)=A_0(0)(\bx,\by)+\sum_{m=1}^n \tilde t(m)=A_0(0)(\bx,\by)+\sum_{m=1}^\infty \tilde t(m)+O(\rho_n^{-M})=:\cr & A_0(\bx,\by)+O(\rho_n^{-M}),
\end{split}
\ene
where we have denoted $A_0(\bx,\by):=A_0(0)(\bx,\by)+\sum_{m=1}^\infty \tilde t(m)$.

Since $|\hat a(j,n;\bx,\by)|+|\check a(j,n;\bx,\by)|=O(\rho_n^{j/2+d/2})$
(it was one of the assumptions of lemma), we have:
\bee
\sum_{j=M-(d-1)/2}^{4M}(|\hat a(j,n;\bx,\by)|+|\check a(j,n;\bx,\by)|)\rho_n^{-j-(d-1)/2}=O(\rho_n^{-\frac{M}{3}}),
\ene
assuming as we can without loss of generality that $M$ is sufficiently large (the required `largeness' of $M$ is independent of $\rho_n$). Thus, when $\rho\in I_n$, we have:
\begin{equation}\label{lemoff1}
\begin{split}&
e_{\rho^2}(\bx,\by)=
\cos(\rho|\bx-\by|)\sum\limits_{p=-d+1}^{M-(d-1)/2-1}\hat a(p)(\bx,\by)\rho^{-p-(d-1)/2}+\cr & \sin(\rho|\bx-\by|)\sum\limits_{p=-d+1}^{M-(d-1)/2-1}\check a(p)(\bx,\by)\rho^{-p-(d-1)/2}+A_0(\bx,\by)+O(\rho_n^{-M/3}).
\end{split}
\end{equation}
Since constants in $O$ do not depend on $n$, 
for all $\rho\ge \rho_0$ we have:
\begin{equation}\label{lemoff2}
\begin{split}&
e_{\rho^2}(\bx,\by)=
\cos(\rho|\bx-\by|)\sum\limits_{p=-d+1}^{[M/3]-d+1}\hat a(p)(\bx,\by)\rho^{-p-(d-1)/2}+\cr & \sin(\rho|\bx-\by|)\sum\limits_{p=-d+1}^{[M/3]-d+1}\check a(p)(\bx,\by)\rho^{-p-(d-1)/2}+A_0(\bx,\by)+O(\rho_n^{-\frac{M}{3}+\frac{d-1}{2}}).
\end{split}
\end{equation}
Taking $M=6L+1$ and making change $q=p+d-1$, we obtain \eqref{eq:main_thm2withconstant}. 
\enp

\bel
For all $\bx\ne\by$ we have $A_0(\bx,\by)=0$.
\enl
\bep
The proof is similar to the approach Shenk and Shubin \cite{ShuSche} used to get rid of the constant in the one-dimensional case (strangely enough, they used this trick on the diagonal; the trick they used for similar purpose off the diagonal does not work in high dimensions). 

First, we notice that, without loss of generality, we can assume that 
the spectrum of $H$ is contained in $[2,+\infty)$. Indeed, if this is not the case, we consider instead the operator $H+sI$ with sufficiently large $s$; it is easy to see that this change does not affect the constant in \eqref{eq:main_thm2withconstant}. 

Let us construct the complex powers of $H$. For $\Re z<-d/2$, the operator $H^z$ has the integral kernel $K_z(\bx,\by)$ holomorphic in $z$; the Seeley type theorem (see \cite{Shu1,Shu2}) then implies that this kernel can be meromorphically continued to the entire complex plane; moreover, 
$K_0(\bx,\by)=0$ when $\bx\ne\by$. For $\Re z<-d/2$, we have:
\bee\label{eq:kernel}
K_z(\bx,\by)=\int_1^{\infty}\la^zd_{\la}e(\la;\bx,\by)=
\int_1^{\infty}z\la^{z-1}e(\la;\bx,\by)d\la. 
\ene 
If we plug \eqref{eq:main_thm2withconstant} with $L=d$ into the RHS of \eqref{eq:kernel}, we will see that the value at $z=0$ of the meromorphic continuation of all the terms in the RHS, except $A_0$, will be zero, so we have $A_0(\bx,\by)=K_0(\bx,\by)=0$.  
\enp

\ber
As we have mentioned in the introduction, we also could have used the heat asymptotic expansion of \cite{Mol} to get rid of the constant $A_0$. 
\enr


The rest of the paper is devoted to proving Lemmas \ref{main_lem} and \ref{main_lemoff}. 

\section{Perturbation of the spectral function}
In this section, we study the spectral projections of two self-adjoint operators $H_1$ and $H_2$ 
that are sufficiently close to each other and compare them. 
We assume that both $H_1$ and $H_2$ act in a Hilbert space $\GH$ and are bounded below: $H_j>aI$. These operators will be assumed to be close not just in the usual operator norm, but also in the abstract version of the Sobolev norm. More precisely, we fix a number $s\ge 0$ and assume that 
\bee\label{4.1}
||(H_1-H_2)(H_2+(1-a)I)^s||<\ep<1.
\ene

For any self-adjoint operator $H$ and any Borel set $I\subset \R$ we denote by 
\bee
E(I;H)
\ene
the spectral projection of $H$ corresponding to the set $I$. We also put
\bee
E_{\la}(H):=E((-\infty,\la];H).
\ene
Let $f\in\GH$. We want to prove that $E_{\la}(H_2)f-E_{\la}(H_1)f$ is small.
Let $\de\ge\ep$ (later we will put $\de=\ep^{1/2}$). 
\bel\label{lem:1} 
Suppose that $\la-a\ge 1$. Then 
we have
\bee
||E((-\infty,\la-\de];H_1)E([\la+\de,+\infty);H_2)(H_2-a+1)^s||\le\frac{\pi\ep}{\de}.
\ene
\enl
\bep

Let us assume that 
\bee
\phi=E((-\infty,\la-\de];H_1)\phi, \ \ \ 
(H_2-a+1)^s\psi=E((\la+\de,\infty];H_2)(H_2-a+1)^s\psi,  
\ene
and 
$||\phi||=||\psi||=1$. The statement is equivalent to proving 
$|(\phi,(H_2-a+1)^s\psi)|\le \pi\ep/\de$. 
Denote by $\gamma=\gamma_N$ the closed square contour in the complex plane symmetric about the real axis and intersecting it at two points: $\lambda$ and $-N$, where $N>-a$ is a large number. Then we have 
\bee\label{contour}
\phi=\int_{\gamma}(H_1-z)^{-1}\phi\, dz.
\ene
Note that the integral $\int_{\gamma}(H_1-z)^{-1}dz$ does not need to converge in general, so we understand the integral in the RHS of \eqref{contour} in the strong sense. Now we have:
\bee\label{contour1}
\bes
&
(\phi,(H_2-a+1)^s\psi)=\int_{\gamma}((H_1-z)^{-1}\phi,(H_2-a+1)^s\psi)dz\\
&
=
\int_{\gamma}(\phi,(H_1-\bar z)^{-1}(H_2-a+1)^s\psi)dz\\
&
=\int_{\gamma}(\phi,((H_2-\bar z)^{-1}+(H_1-\bar z)^{-1}(H_1-H_2)(H_2-\bar z)^{-1})(H_2-a+1)^s\psi)dz\\
&
=
\int_{\gamma}(\phi,(H_1-\bar z)^{-1}(H_1-H_2)(H_2-a+1)^s(H_2-\bar z)^{-1}\psi)dz;
\end{split}
\ene
in the last line we have used the fact that (here, $\bar\gamma$ is the contour complex conjugated to $\gamma$; in fact $\bar\gamma=\gamma$)
\bee
\int_{\bar\gamma}(H_2-\bar z)^{-1}E((\la+\de,\infty];H_2)d\bar z=0.
\ene

Therefore, we have:
\bee\label{contour2}
\bes
&
|(\phi,\psi)|\le\ep
(\int_{\gamma}||(H_1-z)^{-1}\phi||^2 \ |dz|)^{1/2}( \int_{\gamma}||(H_2-\bar z)^{-1}\psi||^2 \ |dz|)^{1/2}\\
&\le\ep
(\int_{\gamma}||(H_1-z)^{-1}||^2 \ |dz|)^{1/2}( \int_{\gamma}||(H_2-\bar z)^{-1}||^2 \ |dz|)^{1/2}. 
\end{split}
\ene
Now the estimate follows from the spectral theorem, since it implies that
\bee
\lim_{N\to\infty}\int_{\gamma_N}||(H_1-z)^{-1}||^2 \ |dz|\le\frac{\pi}{\de}
\ene
and
\bee
\lim_{N\to\infty}\int_{\gamma_N}||(H_2-\bar z)^{-1}||^2 \ |dz|\le\frac{\pi}{\de}.
\ene

\enp

Notice that for $s=0$ Lemma~\ref{lem:1} is a simple version of the Davis-Kahan $\sin\Theta$ Theorem (see e.g. Thms VII.3.1-3.4 from \cite{Bh}). 

\bel \label{lem:2}
Let $f\in\GH$. 
Under the above assumptions, we have:
\bee\label{eq:projdifference}
\bes
&||E_{\la}(H_2)f-E_{\la}(H_1)f||\le 2||E([\la-\de,\la+\de];H_2)f||\\
&+2\pi\ep\de^{-1}||E((-\infty,\lambda];H_2)f||+2\pi\ep\de^{-1}||(H_2-a+1)^{-s}f||.
\end{split}
\ene
\enl

\bep
We have:
\bee\label{trick1}
\bes
&
E_{\la}(H_2)f=E((-\infty,\la-\de);H_2)f+E([\la-\de,\la];H_2)f\\
&
=\bigl[E((-\infty,\la];H_1)+E((\la,+\infty);H_1)\bigr]E((-\infty,\la-\de];H_2)f+E([\la-\de,\la];H_2)f.
\end{split}
\ene
To estimate the second term in the RHS we use Lemma~\ref{lem:1} with $s=0$: 
\bee
\bes
&||E((\la,+\infty);H_1)E((-\infty,\la-\de];H_2)f||\\
&=
||E((\la,+\infty);H_1)E((-\infty,\la-\de];H_2)E((-\infty,\la-\de];H_2)f||\\
&\le 2\pi\ep\de^{-1}||E((-\infty,\lambda];H_2)f||.
\end{split}
\ene

For the first term we perform the same trick as in \eqref{trick1} again, and obtain:
\bee\label{trick2}
\bes
&
E((-\infty,\la];H_1)E((-\infty,\la-\de];H_2)f\\
&
=E((-\infty,\la];H_1)f-E((-\infty,\la];H_1)E((\la-\de,\la+\de);H_2)f\\
&
-
E((-\infty,\la];H_1)E([\la+\de,+\infty);H_2)(H_2-a+1)^s(H_2-a+1)^{-s}f. 
\end{split}
\ene
Now the statement follows from \eqref{trick1}, \eqref{trick2}, and Lemma \ref{lem:1}. 
\enp

Suppose, 
\bee
H=-\Delta+V
\ene
acting in $\R^d$, where $V\in \mathrm{USB}(\R^d)$. Then $E_{\la}(H)$ has the Schwartz kernel (see e.g. \cite{AgKa,Si}) which we denote by $e(\la;H;\bx,\by)$; we will often omit writing the dependence on some of the arguments when it could cause no ambiguity. Let $\bx_0$ be any point from $\R^d$; denote by $\de_{\bx_0}$ the Dirac delta-function centred at $\bx_0$. 
\bel\label{L2norm}
$E_{\la}(H)\de_{\bx_0}\in L_2(\R^d)$ and for large $\la$ we have 
\bee
||E_{\la}(H)\de_{\bx_0}|| \ll \la^{d/4}. 
\ene
\enl
\bep
We have:
\bee
\bes
||E_{\la}(H)\de_{\bx_0}||^2=&\int_{\R^d}|e(\la;H;\bx_0,\by)|^2 d\by
=
\int_{\R^d}e(\la;H;\bx_0,\by) e(\la;H;\by,\bx_0) d\by\\
&
=e(\la;H;\bx_0,\bx_0)=O(\la^{d/2}). 
\end{split}
\ene
We used the fact that $e$ is real-valued and symmetric and $E$ is a projection. For the last estimate see e.g. \cite {AgKa}. \enp

Finally, we will prove another Lemma which we will need in Section 7. 

\begin{defn}\label{Tau} For an interval $[\lambda_1,\lambda_2]\subset\R$ we introduce the set $T([\lambda_1,\lambda_2])$ of functions $\tau(z)$ satisfying the following properties:

1) $\tau(z)$ is analytic in the neighbourhood of the interval $[\lambda_1,\lambda_2]$,

2) $\tau(x)$ is real-valued on the interval $[\lambda_1,\lambda_2]$,

3) $\tau'(x)>0$ on the interval $[\lambda_1,\lambda_2]$.
\end{defn}

In particular, these properties imply that the function $\tau^{-1}$ is well-defined and analytic in a neighborhood of the interval $[\tau(\lambda_1),\tau(\lambda_2)]$. We will need the following auxiliary statement which can be considered as a special case of the rule for changing the order of the integration.
 
\begin{lem}\label{aux} Let $\tau\in T$. Let function $f$ be analytic in a neighborhood of $[\tau(\lambda_1),\tau(\lambda_2)]$. Let $\Gamma$ be a contour around interval $[\lambda_1,\lambda_2]$ completely inside the domain of analyticity of the function $\tau$ and let $\tilde\Gamma$ be a contour around interval $[\tau(\lambda_1),\tau(\lambda_2)]$ completely inside the domain of analyticity of the functions $\tau^{-1}$ and $f$. Then we have the following identity
\begin{equation}\label{intaux}
\int\limits_{\tau(\lambda_1)}^{\tau(\lambda_2)}f(r)\,dr\oint\limits_{\Gamma}\frac{d\mu}{\tau^{-1}(r)-\mu}=-\int\limits_{\lambda_1}^{\lambda_2}\,d\mu\oint\limits_{\tilde\Gamma}\frac{f(r)\,dr}{\tau^{-1}(r)-\mu}.
\end{equation}
\end{lem}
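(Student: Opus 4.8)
The plan is to evaluate both sides of \eqref{intaux} by residue calculus and then identify them through the substitution $r=\tau(\mu)$. This is, in essence, Fubini's theorem for contour integrals combined with the Cauchy integral formula applied in the two possible orders.

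First I would treat the left-hand side. For every point $r$ on the path from $\tau(\lambda_1)$ to $\tau(\lambda_2)$ we have $\tau^{-1}(r)\in[\lambda_1,\lambda_2]$, hence $\tau^{-1}(r)$ lies inside the contour $\Gamma$. Viewing $\mu\mapsto(\tau^{-1}(r)-\mu)^{-1}$ as a meromorphic function with a single simple pole (of residue $-1$) inside $\Gamma$, the Cauchy integral formula gives $\oint_{\Gamma}\frac{d\mu}{\tau^{-1}(r)-\mu}=-2\pi i$, provided $\Gamma$ is positively oriented (the opposite orientation merely flips an overall sign on both sides of \eqref{intaux}, so it is harmless). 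Therefore the left-hand side equals $-2\pi i\int_{\tau(\lambda_1)}^{\tau(\lambda_2)}f(r)\,dr$.

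Next I would treat, for a fixed $\mu\in[\lambda_1,\lambda_2]$, the inner integral on the right-hand side. The integrand $r\mapsto f(r)\bigl(\tau^{-1}(r)-\mu\bigr)^{-1}$ is meromorphic in a neighbourhood of $[\tau(\lambda_1),\tau(\lambda_2)]$: $f$ is holomorphic there by hypothesis, $\tau^{-1}$ is holomorphic there by the remark following Definition \ref{Tau}, and — since $\tilde\Gamma$ is taken inside the region where $\tau^{-1}$ is the genuine single-valued inverse (so that $\tau^{-1}$ is injective there) — the denominator vanishes only where $r=\tau(\mu)$. Because $\tau$ is a strictly increasing bijection of $[\lambda_1,\lambda_2]$ onto $[\tau(\lambda_1),\tau(\lambda_2)]$, this point $\tau(\mu)$ lies inside $\tilde\Gamma$, and it is a simple zero since $\tau'>0$; the residue of the integrand there is $f(\tau(\mu))/(\tau^{-1})'(\tau(\mu))=f(\tau(\mu))\tau'(\mu)$. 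Hence $\oint_{\tilde\Gamma}\frac{f(r)\,dr}{\tau^{-1}(r)-\mu}=2\pi i\,f(\tau(\mu))\tau'(\mu)$, so the right-hand side of \eqref{intaux} equals $-2\pi i\int_{\lambda_1}^{\lambda_2}f(\tau(\mu))\tau'(\mu)\,d\mu$. The change of variables $r=\tau(\mu)$, $dr=\tau'(\mu)\,d\mu$ (legitimate as $\tau$ is real-analytic and strictly increasing on $[\lambda_1,\lambda_2]$) turns this into $-2\pi i\int_{\tau(\lambda_1)}^{\tau(\lambda_2)}f(r)\,dr$, matching the value computed for the left-hand side. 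This establishes \eqref{intaux}.

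The only points demanding a little care are keeping the orientations of $\Gamma$ and $\tilde\Gamma$ consistent, and the observation that $\tilde\Gamma$ should be chosen small enough that $\tau^{-1}(r)-\mu$ has no extraneous zeros inside it for $\mu\in[\lambda_1,\lambda_2]$; these are bookkeeping matters rather than real obstacles, so I do not anticipate any substantive difficulty in this lemma.
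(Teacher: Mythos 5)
Your proof is correct and uses essentially the same ingredients as the paper's: the Cauchy/residue formula for the contour integrals and the change of variables $r=\tau(\mu)$ for the real integral; the paper merely organizes these as a chain of substitutions transforming the left side of \eqref{intaux} into the right (after arranging $\tilde\Gamma=\tau(\Gamma)$), instead of evaluating both sides in closed form as $-2\pi i\int_{\tau(\lambda_1)}^{\tau(\lambda_2)}f(r)\,dr$. Your caveats about consistent orientation and about shrinking $\tilde\Gamma$ so that $\tau^{-1}(r)-\mu$ has no extraneous zeros correspond exactly to the paper's ``shrinking by analyticity'' remark, so there is no gap.
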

\begin{proof} Denote by $I$ the left hand side of \eqref{intaux}. If necessary, shrinking by analyticity $\Gamma$ and $\tilde\Gamma$, we may assume that $\tilde\Gamma=\tau(\Gamma)$. We introduce the change of  variables: $\nu:=\tau^{-1}(r)$, $s:=\tau(\mu)$. Then we have
$$
I=\int\limits_{\tau(\lambda_1)}^{\tau(\lambda_2)}dr\oint\limits_{\Gamma}\frac{f(\tau(\mu))\,d\mu}{\tau^{-1}(r)-\mu}=-\int\limits_{\lambda_1}^{\lambda_2}\tau'(\nu)\,d\nu\oint\limits_{\tilde\Gamma}\frac{f(s)(\tau^{-1})'(s)\,ds}{\tau^{-1}(s)-\nu}.
$$
Next, by the properties of $\tau$ and Cauchy Theorem we have
\begin{equation*}
\begin{split}
&-\int\limits_{\lambda_1}^{\lambda_2}\,d\nu\oint\limits_{\tilde\Gamma}\frac{f(s)\tau'(\nu)(\tau^{-1})'(s)\,ds}{\tau^{-1}(s)-\nu}=-\int\limits_{\lambda_1}^{\lambda_2}\,d\nu\oint\limits_{\tilde\Gamma}\frac{f(s)\tau'(\tau^{-1}(s))(\tau^{-1})'(s)\,ds}{\tau^{-1}(s)-\nu}\cr =&-\int\limits_{\lambda_1}^{\lambda_2}\,d\nu\oint\limits_{\tilde\Gamma}\frac{f(s)\,ds}{\tau^{-1}(s)-\nu}.
\end{split}
\end{equation*}
Now, the change of notation $\nu$ by $\mu$ and $s$ by $r$ completes the proof. \end{proof}

\section{Resonance zones and coordinates there}

In this section, we define resonance regions, state some of their properties and introduce convenient coordinates in these zones. The material in this section follows the narration of \cite{PaSh1} which contains the proofs of all statements in this section. 

Recall the definition of the quasi-lattice subspaces from Section 2: we say that $\GV$ is a quasi-lattice subspace of dimension $m$, if $\GV$ is a linear
span of $m$ linear independent vectors $\bth_1,\dots,\bth_m$ with $\bth_j\in\tilde\Bth\ \forall j$. As before, by
$\Bth_{\tilde k}$
we denote the algebraic sum of $\tilde k$ copies of $\Bth$; remember that we consider the index $\tilde k$ fixed. We also put
$\Bth'_{\tilde k}:=\Bth_{\tilde k}\setminus\{0\}$. For each $\GV\in\CV$ we put $S_{\GV}:=\{\bxi\in\GV,\ |\bxi|=1\}$.
For each non-zero $\bth\in\R^d$ we put $\bn(\bth):=\bth|\bth|^{-1}$.

Let $\GV\in\CV_m$. We say that $\GF$ is a {\it flag} generated by $\GV$, if $\GF$ is a sequence $\GV_j\in\CV_j$ ($j=0,1,\dots,m$)
such that $\GV_{j-1}\subset\GV_j$ and $\GV_m=\GV$. We say that $\{\bnu_j\}_{j=1}^m$ is a sequence generated by $\GF$
if $\bnu_j\in\GV_j\ominus\GV_{j-1}$ and $||\bnu_j||=1$ (obviously, this condition determines each $\bnu_j$ up to the
multiplication by $-1$). We denote by $\CF(\GV)$ the collection of all flags generated by $\GV$. We also
fix an increasing sequence of positive numbers $\al_j$ ($j=1,\dots,d$) with $\al_d<\frac{1}{2 d}$
(these numbers depend only on $d$) and put $L_j:=\rho_n^{\al_j}$.

Let $\bth\in\Bth'_{\tilde k}$. We call by {\it resonance zone} generated by $\bth$
\bee\label{eq:1}
\Lambda(\bth):=\{\bxi\in\R^d,\ |\lu\bxi,\bn(\bth)\ru|\le L_1\}.
\ene
Suppose, $\GF\in\CF(\GV)$ is a flag and $\{\bnu_j\}_{j=1}^m$ is a sequence generated by $\GF$. We define
\bee\label{eq:2}
\Lambda(\GF):=\{\bxi\in\R^d,\ |\lu\bxi,\bnu_j\ru|\le L_j\}.
\ene
If $\dim\GV=1$, definition \eqref{eq:2} is reduced to \eqref{eq:1}.
Obviously, if $\GF_1\subset\GF_2$, then $\Lambda(\GF_2)\subset\Lambda(\GF_1)$.

Suppose, $\GV\in\CV_j$. We denote
\bee\label{eq:Bxi1}
\Bxi_1(\GV):=\cup_{\GF\in\CF(\GV)}\Lambda(\GF).
\ene
Note that
$\Bxi_1(\GX)=\R^d$ and $\Bxi_1(\GV)=\Lambda(\bth)$ if $\GV\in\CV_1$ is spanned by $\bth$.
Finally, we put
\bee\label{eq:Bxi}
\Bxi(\GV):=\Bxi_1(\GV)\setminus(\cup_{\GU\supsetneq\GV}\Bxi_1(\GU))=\Bxi_1(\GV)
\setminus(\cup_{\GU\supsetneq\GV}\cup_{\GF\in\CF(\GU)}\Lambda(\GF)).
\ene
We call $\Bxi(\GV)$ the resonance region generated by $\GV$. 
Sometimes, we will be calling the region $\Bxi(\GX)$ {\it the non-resonance region}.

The following results were proved in \cite{PaSh1}. We always assume that $\rho_0$
(and thus $\rho_n$) is sufficiently large.

\bel\label{lem:propBUps}

(i) We have
\bee
\cup_{\GV\in\CV}\Bxi(\GV)=\R^d.
\ene

(ii) $\bxi\in\Bxi_1(\GV)$ iff $\bxi_{\GV}\in\Omega(\GV)$, where $\Omega(\GV)\subset\GV$
is a certain bounded set (more precisely, $\Omega(\GV)=\Bxi_1(\GV)\cap\GV\subset B(m L_m)$ if
$\dim\GV=m$).

(iii) $\Bxi_1(\R^d)=\Bxi(\R^d)$ is a bounded set, $\Bxi(\R^d)\subset B(d L_d)$; all other sets
$\Bxi_1(\GV)$ are unbounded.

\enl

\bel\label{lem:intersect}
Let $\GV,\GU\in\CV$. Then $(\Bxi_1(\GV)\cap\Bxi_1(\GU))\subset \Bxi_1(\GW)$,
where $\GW:=\GV+\GU$ (algebraic sum).
\enl
\bec
(i) We can re-write definition \eqref{eq:Bxi} like this:
\bee\label{eq:Bxibis}
\Bxi(\GV):=\Bxi_1(\GV)\setminus(\cup_{\GU\not\subset\GV}\Bxi_1(\GU)).
\ene

(ii) If $\GV\ne\GU$, then $\Bxi(\GV)\cap\Bxi(\GU)=\emptyset$.

(iii) We have $\R^d=\sqcup_{\GV\in\CV}\Bxi(\GV)$ (the disjoint union).
\enc

\bel\label{lem:newXi}
We have
\bee\label{eq:newBxi}
\Bxi_1(\GV)\cap\cup_{\GU\supsetneq\GV}\Bxi_1(\GU)=\Bxi_1(\GV)\cap\cup_{\GW\supsetneq\GV, \dim\GW=1+\dim\GV}\Bxi_1(\GW).
\ene
\enl

\bec
We can re-write \eqref{eq:Bxi} as
\bee\label{eq:Bxibbis}
\Bxi(\GV):=\Bxi_1(\GV)\setminus(\cup_{\GW\supsetneq\GV, \dim\GW=1+\dim\GV}\Bxi_1(\GW)).
\ene
\enc

\bel\label{lem:Upsilon}
Let $\GV\in\CV$ and $\bth\in\Bth_{\tilde k}$. Suppose that $\bxi\in\Bxi(\GV)$ and
both points $\bxi$ and $\bxi+\bth$ are inside $\Lambda(\bth)$. Then $\bth\in\GV$ and $\bxi+\bth\in\Bxi(\GV)$.
\enl

Now we define another important object. 

\begin{defn}
\label{reachability:defn}
Let $\bth, \bth_1, \bth_2, \dots, \bth_l$ be some vectors from $\Bth'_{\tilde k}$,
which are not necessarily distinct.
\begin{enumerate}
\item \label{1}
We say that two vectors
$\bxi, \boldeta\in\R^d$ are \textsl{$\bth$-resonant congruent}
if both $\bxi$ and $\boldeta$ are inside $\L(\bth)$ and $(\bxi - \boldeta) =l\bth$ with $l\in\Z$.
In this case we write $\bxi \leftrightarrow \boldeta \mod \bth$.
\item\label{2}
For each $\bxi\in\R^d$ we denote by
$\BUps_{\bth}(\bxi)$ the set of all points which are
$\bth$-resonant congruent to $\bxi$.
For $\bth\not = \mathbf 0$ we say that
$\BUps_{\bth}(\bxi) = \varnothing$
if $\bxi\notin\L(\bth)$.
\item\label{3}
We say that $\bxi$ and $\boldeta$
are \textsl{$\bth_1, \bth_2, \dots, \bth_l$-resonant congruent},
if there exists a sequence $\bxi_j\in\R^d, j=0, 1, \dots, l$ such that
$\bxi_0 = \bxi$, $\bxi_l = \boldeta$,
and $\bxi_j\in\BUps_{\bth_j}(\bxi_{j-1})$ for $j = 1, 2, \dots, l$.

\item
We say that $\boldeta\in\R^d$ and
$\bxi\in\R^d$ are \textsl{resonant congruent}, if either $\bxi=\boldeta$ or $\bxi$ and $\boldeta$ are $\bth_1, \bth_2, \dots, \bth_l$-resonant congruent with some
 $\bth_1, \bth_2, \dots, \bth_l\in\Bth_{\tilde k}'$. The set of \textbf{all} points, resonant congruent
to $\bxi$, is denoted by $\BUps(\bxi)$.
For points $\boldeta\in\BUps(\bxi)$ (note that this condition is equivalent to
$\bxi\in\BUps(\boldeta)$) we write $\boldeta\leftrightarrow\bxi$.
\end{enumerate}
\end{defn}

Note that $\BUps(\bxi)=\{\bxi\}$ for any $\bxi\in\Bxi(\GX)$. Now Lemma \ref{lem:Upsilon} immediately implies
\bec\label{cor:Upsilon}
For each $\bxi\in\Bxi(\GV)$ we have $\BUps(\bxi)\subset\Bxi(\GV)$ and thus
\bees
\Bxi(\GV)=\sqcup_{\bxi\in\Bxi(\GV)}\BUps(\bxi).
\enes
\enc
\bel\label{lem:diameter}
The diameter of $\BUps(\bxi)$ is bounded above by $mL_m$, if $\bxi\in\Bxi(\GV)$, $\GV\in\CV_m$.
\enl
\bel\label{lem:finiteBUps}
For each $\bxi\in\Bxi(\GV),\ \GV\ne\R^d$, the set $\BUps(\bxi)$ is finite.
\enl

Next, we are going to introduce a convenient set of coordinates in $\Bxi(\GV)$.
Let $\GV\in\CV_m$, $m<d$, be fixed. Then, as we know, $\bxi\in\Bxi_1(\GV)$ if and only if
$\bxi_{\GV}\in\Om(\GV)$. Let $\{\GU_j\}$ be a collection of all subspaces $\GU_j\in\CV_{m+1}$
such that each $\GU_j$ contains $\GV$. Let $\bmu_j=\bmu_j(\GV)$ be (any) unit vector from $\GU_j\ominus\GV$.
Then, for $\bxi\in\Bxi_1(\GV)$, we have  $\bxi\in\Bxi_1(\GU_j)$ if and only if the estimate
$|\lu\bxi,\bmu_j\ru|=|\lu\bxi_{\GV^{\perp}},\bmu_j\ru|\le L_{m+1}$ holds. Thus, formula \eqref{eq:Bxibbis} implies that
\bee\label{eq:Bxibbbis}
\Bxi(\GV)=\{\bxi\in\R^d,\ \bxi_{\GV}\in\Omega(\GV)\ \&\ \forall j \ |\lu\bxi_{\GV^{\perp}},\bmu_j(\GV)\ru| > L_{m+1}\}.
\ene
The collection $\{\bmu_j(\GV)\}$ obviously coincides with
\bee
\{\bn(\bth_{\GV^{\perp}}),\ \bth\in\Bth_{\tilde k}\setminus\GV\}.
\ene

The set $\Bxi(\GV)$ is, in general, disconnected; it consists of several
connected components which we will denote by $\{\Bxi(\GV)_p\}_{p=1}^P$. Let us fix a connected component $\Bxi(\GV)_p$.
Then for some vectors $\{\tilde\bmu_j(p)\}_{j=1}^{J_p}\subset \{\pm\bmu_j\}$ we have
\bee\label{eq:Bxip}
\Bxi(\GV)_p=\{\bxi\in\R^d,\ \bxi_{\GV}\in\Omega(\GV)\ \&\ \forall j \ \lu\bxi_{\GV^{\perp}},\tilde\bmu_j(p)\ru > L_{m+1}\};
\ene
we assume that $\{\tilde\bmu_j(p)\}_{j=1}^{J_p}$ is the minimal set with this property, so that each hyperplane
$$
\{\bxi\in\R^d,\ \bxi_{\GV}\in\Omega(\GV)\ \ \&\ \ \lu\bxi_{\GV^{\perp}},\tilde\bmu_j(p)\ru = L_{m+1}\},\ j=1,\dots,J_p
$$
has a non-empty intersection with the boundary of $\Bxi(\GV)_p$. It is not hard to see that $J_p\ge d-m$. Indeed, otherwise $\Bxi(\GV)_p$ would have non-empty intersection with $\Bxi_1(\GV')$ for some $\GV'$, $\GV\subsetneq\GV'$. We also introduce
\bee\label{eq:Bxiptilde}
\tilde\Bxi(\GV)_p:=\{\bxi\in\GV^{\perp},\ 
\forall j \
\lu\bxi,\tilde\bmu_j(p)\ru > 0\}.
\ene
Note that our assumption that $\Bxi(\GV)_p$ is a connected component of $\Bxi(\GV)$ implies that for any
$\bxi\in\tilde\Bxi(\GV)_p$ and any $\bth\in\Bth_{\tilde k}\setminus\GV$ we have
\bee\label{eq:ne0}
\lu\bxi,\bth\ru=\lu\bxi,\bth_{\GV^{\perp}}\ru\ne 0.
\ene
We also put $K:=d-m-1$.

Throughout most of this paper, we will assume that the number $J_p$ of `defining planes' is the minimal possible, i.e. $J_p= K+1$; in the general case we refer to Section 11 of \cite{PaSh1} where it is explained how to deal with the case of arbitrary
$\Xi(\GV)_p$.
If $J_p= K+1$, then the set
$\{\tilde\bmu_j(p)\}_{j=1}^{K+1}$ is linearly independent.
Let $\ba=\ba(p)$ be a unique
point from $\GV^\perp$ satisfying the following conditions: $\lu\ba,\tilde\bmu_j(p)\ru = L_{m+1}$, $j=1,\dots,K+1$.
Then, since the determinant of the Gram matrix of vectors $\tilde\bmu_j(p)$ is $\gg\rho_n^{0-}$, we have $|\ba|\ll L_{m+1}\rho_n^{0+}$.
We introduce the shifted cylindrical coordinates in $\Bxi(\GV)_p$. These coordinates will be denoted by $\bxi=(r;\tilde\Phi;X)$. Here,
$X=(X_1,\dots,X_m)$ is an arbitrary set of cartesian coordinates in $\Omega(\GV)$. These coordinates do not depend on the choice of the
connected component $\Bxi(\GV)_p$. The rest of the coordinates $(r,\tilde\Phi)$ are shifted spherical coordinates in $\GV^{\perp}$,
centered at $\ba$. This means that
\bee\label{eq:r}
r(\bxi)=|\bxi_{\GV^{\perp}}-\ba|
\ene
and
\bee\label{eq:tildePhi}
\tilde\Phi=\bn(\bxi_{\GV^{\perp}}-\ba)\in S_{\GV^{\perp}}.
\ene
More precisely, $\tilde\Phi\in M$, where $M=M_p:=\{\bn(\bxi_{\GV^{\perp}}-\ba),\ \bxi\in\Bxi(\GV)_p\}\subset S_{\GV^{\perp}}$
is a $K$-dimensional spherical simplex with $K+1$ sides. Note that
\bee\label{eq:Mp}
\bes
M_p&=\{\bn(\bxi_{\GV^{\perp}}-\ba),\ \bxi\in\Bxi(\GV)_p\}
=\{\bn(\bxi_{\GV^{\perp}}-\ba),\  \forall j \ \lu\bxi_{\GV^{\perp}},\tilde\bmu_j(p)\ru > L_{m+1}\}\\
&=\{\bn(\boldeta),\ \boldeta:=\bxi_{\GV^{\perp}}-\ba\in\GV^\perp,\  \forall j \ \lu\boldeta,\tilde\bmu_j(p)\ru > 0\}
=S_{\GV^{\perp}}\cap\tilde\Bxi(\GV)_p.
\end{split}
\ene
We will 
denote by $d\tilde\Phi$ the spherical Lebesgue measure on $M_p$.
For each non-zero vector $\bmu\in\GV^{\perp}$, we denote
\bee
W(\bmu):=\{\boldeta\in\GV^\perp,\ \lu\boldeta,\bmu\ru=0\}.
\ene
Thus, the sides of the simplex $M_p$ are intersections of $W(\tilde\bmu_j(p))$ with the sphere $S_{\GV^{\perp}}$.
Each vertex $\bv=\bv_t$, $t=1,\dots,K+1$ of $M_p$ is an intersection of $S_{\GV^{\perp}}$ with $K$ hyperplanes
$W(\tilde\bmu_j(p))$,
$j=1,\dots,K+1$, $j\ne t$. This means that $\bv_t$ is a unit vector from $\GV^{\perp}$ which is orthogonal to $\{\tilde\bmu_j(p)\}$,
$j=1,\dots,K+1$, $j\ne t$; this defines $\bv$ up to a multiplication by $-1$.
\bel\label{lem:sign}
Let $p$ be fixed.
Suppose, $\bth\in\Bth_{\tilde k}\setminus\GV$ and $\bth_{\GV^{\perp}}=\sum_{j=1}^{K+1} b_j\tilde\bmu_j(p)$.
Then either all coefficients $b_j$ are non-positive, or all of them are non-negative.
\enl

Assume that the diameter of $M_p$ is $\le (100d^2)^{-1}$, which we can always achieve by taking sufficiently large $\tilde k$. We put $\Phi_q:=\frac{\pi}{2}-\phi(\bxi_{\GV^\perp}-\ba,\tilde\bmu_q(p))$, $q=1,\dots,K+1$, where $\phi(\cdot,\cdot)$ is the angle between two non-zero vectors. 
The geometrical meaning of these coordinates is simple: $\Phi_q$ is the spherical distance between
$\tilde\Phi=\bn(\bxi_{\GV^{\perp}}-\ba)$ and $W(\tilde\bmu_q(p))$. The reason why we have introduced $\Phi_q$ is that in these coordinates some important objects will be especially simple (see e.g. Lemma~\ref{lem:products} below) which is very convenient for integration in Section 7. At the same time, the set of coordinates $(r,\{\Phi_q\})$ contains $K+2$ variables, whereas we only need $K+1$ coordinates in
$\GV^{\perp}$. Thus, we have one constraint for variables $\Phi_j$. Namely, 
let $\{\be_j\}$, $j=1,\dots,K+1$ be a fixed orthonormal basis in $\GV^{\perp}$ chosen in such a way that the $K+1$-st axis
passes through $M_p$.
Then we have $\be_j=\sum_{l=1}^{K+1}a_{jl}\tilde\bmu_l$ with some matrix $\{a_{jl}\}$, $j,l=1,\dots,K+1$, and $\tilde\bmu_l=\tilde\bmu_l(p)$.
Therefore (recall that we denote $\boldeta:=\bxi_{\GV^{\perp}}-\ba$),
\bee\label{eq:etaj}
\eta_j=\lu\boldeta,\be_j\ru=r\sum_{q=1}^{K+1}a_{jq}\sin\Phi_q
\ene
and, since $r^2(\bxi)=|\boldeta|^2=\sum_{j=1}^{K+1}\eta_j^2$, this
implies that
\bee\label{eq:odin}
\sum_j(\sum_q a_{jq}\sin\Phi_q)^2=1,
\ene
which is our constraint. 

Let us also put
\bee\label{eq:etajdash}
\eta_j':=\frac{\eta_j}{|\boldeta|}=\sum_{q=1}^{K+1}a_{jq}\sin\Phi_q.
\ene
Then we can write the surface element $d\tilde\Phi$ in the coordinates $\{\eta_j'\}$ as
\bee\label{eq:surfaceelement}
d\tilde\Phi=\frac{d\eta_1'\dots d\eta_K'}{\eta_{K+1}}=\frac{d\eta_1'\dots d\eta_K'}{(1-\sum_{j=1}^K(\eta_{j}')^2)^{1/2}},
\ene
where the denominator is bounded below by $1/2$ by our choice of the basis $\{\be_j\}$.

The next lemma describes the dependence on $r$ of all possible inner products $\lu\bxi,\bth\ru$, $\bth\in\Bth_{\tilde k}$, $\bxi\in\Bxi(\GV)_p$.
\bel\label{lem:products}
Let
$\bxi\in\Bxi(\GV)_p$, $\GV\in\CV_m$, and $\bth\in\Bth_{\tilde k}$.

(i) If $\bth\in\GV$, then $\lu\bxi,\bth\ru$ does not depend on $r$.

(ii) If $\bth\not\in\GV$ and $\bth_{\GV^{\perp}}=\sum_{q}b_q\tilde\bmu_q(p)$, 
then
\bee\label{eq:innerproduct1}
\lu\bxi,\bth\ru=\lu X,\bth_{\GV}\ru+L_{m+1}\sum_{q}b_q+r(\bxi)\sum_{q}b_q\sin\Phi_q.
\ene

In the case (ii) all the coefficients $b_q$ are either non-positive or non-negative and
each non-zero coefficient $b_q$ satisfies
\bee\label{eq:n10}
 |b_q| \asymp 1.
\ene
\enl

Finally, let us denote (recall that $\la_n=\rho_n^2$)
\bee
\CX_n
:=\{\bxi\in\R^2,\,|\bxi|^2\in [0.7\la_n,
17.5\la_n
]\}.
\ene
We also put
\bee
\CA=\CA_n:=\cup_{\bxi\in\CX_n}\BUps(\bxi).
\ene
Lemma \ref{lem:diameter} implies that for each $\bxi\in\CA$ we have $|\bxi|^2\in[0.5\la_n,18\la_n]$. In particular, we have
\bee\label{eq:CARd}
\CA\cap\BXi(\R^d)=\emptyset.
\ene
For each $\GV\in\CV_m$, $m<d$, we put
\bee
\CA(\GV):=\CA_n\cap\Bxi(\GV).
\ene

%

\section{Pseudo-differential operators and the gauge transform}

This section is another one where we present definitions and results from \cite{PaSh1}; as before, the proofs of all statements can be found either in that paper or in \cite{Sob}, \cite{Sob1}, and \cite{ParSob}.
In this section, we construct operators $H_1$ and $H_2$ described in the Introduction. Since we have agreed that our potential is quasi-periodic, it is enough for our purposes to deal with quasi-periodic pseudo-differential operators. 

\subsection{Classes of PDO's and their properties}\label{classes:subsect}


For any $f\in L_2(\R^d)$ we define the Fourier transform:
\begin{equation*}
(\mathcal F f)(\bxi)
= \frac{1}{(2\pi)^{\frac{d}{2}}} \int_{\R^d} e^{-i\bxi
\bx}f(\bx) d\bx,\ \bxi\in\R^d.
\end{equation*}
Let $b = b(\bx, \bxi)$, $\bx, \bxi\in\R^d$, be a
quasi-periodic (in $\bx$) complex-valued function, i.e. for some finite  set $\hat{\Bth}$ of frequencies (we always assume $\hat\Bth$ to be symmetric and to contain $0$)
\begin{equation}\label{eq:sumf}
b(\bx, \bxi) = \sum\limits_{\bth\in\hat{\Bth}}\hat{b}(\bth, \bxi)\be_{\bth}(\bx)
\end{equation}
where
\bees
\hat{b}(\bth, \bxi):=\BM_\bx(b(\bx,\bxi)\be_{-\bth}(\bx))
\enes
are Fourier coefficients of $b$ (recall that $\BM$ is the mean of an almost-periodic function). 
Put $\lu \bt \ru := \sqrt{1+|\bt|^2},\
\forall \bt\in\R^d$. 
We say that the symbol $b$ belongs to the
class $\BS_{\a}=\BS_{\a}(\beta) = \BS_{\a}(\beta,\,\hat{\Bth})$,\ $\a\in\R$, $0<\beta\leq1$, if
for any $l\ge 0$ and any non-negative $s\in\Z$ the condition
\begin{equation}\label{1b1:eq}
\1 b \1^{(\a)}_{l, s} :=
\max_{|\bs| \le s}
 \sum\limits_{\bth\in\hat{\Bth}}\lu \bth\ru^{l}\sup_{\bxi}\,\lu\bxi\ru^{(-\a + |\bs|)\beta}
|\BD_{\bxi}^{\bs} \hat b(\bth, \bxi)|<\infty, \ \ |\bs| = s_1+ s_2 + \dots + s_d,
\end{equation}
is fulfilled.
The quantities \eqref{1b1:eq} define norms on the class $\BS_\a$. Note that $\BS_\a$ is an increasing function of $\a$,
i.e. $\BS_{\a}\subset\BS_{\g}$ for $\a < \g$. We also have:
\bee
\sum\limits_{\bth\in\hat{\Bth}}\lu\bth\ru^{l}\sup_{\bxi}\,\lu\bxi\ru^{(-\a+s+1)\b}(|\BD^{\bs}_{\bxi}\hat b(\bth, \bxi+ \boldeta) -
\BD^{\bs}_{\bxi}\hat b(\bth, \bxi)|)\le C\1 b\1^{(\a)}_{l, s+1}  \lu\boldeta\ru^{|\a-s-1|\b}
|\boldeta|, \ s = |\bs|, \label{differ:eq}
\ene
with a constant $C$ depending only on $\a, s$. For a vector $\boldeta\in\R^d$ introduce
the symbol
\begin{equation}\label{bboldeta:eq}
b_{\boldeta}(\bx, \bxi) = b(\bx, \bxi+\boldeta), \boldeta\in\R^d,
\end{equation}
so that $\hat b_{\boldeta}(\bth, \bxi) = \hat b(\bth, \bxi+\boldeta)$ .
The bound \eqref{differ:eq} implies that for all $|\boldeta|\le C$ we have
\begin{equation}\label{differ1:eq}
\1 b - b_{\boldeta}\1^{(\a-1)}_{l, s}\le C \1 b\1^{(\a)}_{l, s+1}|\boldeta|,\
\end{equation}
uniformly in $\boldeta$: $|\boldeta|\le C$.

Now we define the PDO $\op(b)$ in the usual way:
\begin{equation}\label{eq:deff}
\op(b)u(\bx) = \frac{1}{(2\pi)^{\frac{d}{2}}} \int  b(\bx, \bxi)
e^{i\bxi \bx} (\mathcal Fu)(\bxi) d\bxi,
\end{equation}
the integrals being over $\Rd$. Under the condition $b\in\BS_\a$
 the integral in the r.h.s. is clearly finite for any
$u$ from the Schwarz
class $\plainS(\Rd)$. Moreover, the condition $b\in \BS_0$
guarantees the boundedness of $\op(b)$ in $L_2(\Rd)$, see
Proposition \ref{bound:prop}. Unless otherwise stated, from now on
$\plainS(\Rd)$ is taken as a natural domain for all PDO's at hand, when they act in $L_2$.
Notice that the operator $\op(b)$ is
symmetric if its symbol satisfies the condition
\begin{equation}\label{selfadj:eq}
\hat b(\bth, \bxi) = \overline{\hat b(-\bth, \bxi+\bth)}.
\end{equation}
We shall call such symbols \textsl{symmetric}.


We note that in the very beginning when we consider \eqref{eq:Sch}, our operator $\op(b)$ is a multiplication by a function $b$ (in particular, $b\in\BS_0$). However, during modifications and transformations below our perturbation will eventually become a pseudo-differential operator. 
Thus, it is convenient in abstract statements to consider $b$ a pseudo-differential symbol from some $\BS_\a$ class.

Now we list some properties of quasi-periodic PDO's. The proof is very similar (with obvious changes) to the proof of analogous statements in \cite{Sob}. In what follows, {\it if we need to calculate a product of two (or more) operators with some symbols $b_j\in\BS_{\a_j}({\Bth}_j)$ we will always consider that $b_j\in\BS_{\a_j}(\sum_j{\Bth}_j)$ where, of course, all added terms are assumed to have zero coefficients in front of them}.


\begin{prop}\label{bound:prop}
Suppose that $\1 b\1^{(0)}_{0, 0}<\infty$. Then $\op(b)$ is bounded in both $L_2(\R^d)$ and $B_2(\R^d)$ and $\|\op(b)\|\le \1 b \1^{(0)}_{0, 0}$.
\end{prop}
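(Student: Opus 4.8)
The plan is to prove boundedness of $\op(b)$ with norm at most $\1 b\1^{(0)}_{0,0}$ by using the Fourier expansion of the symbol in the $\bx$ variable to write $\op(b)$ as a sum of elementary operators indexed by the frequency set $\hat\Bth$, and then to estimate this sum directly. Concretely, since $b(\bx,\bxi)=\sum_{\bth\in\hat\Bth}\hat b(\bth,\bxi)\be_{\bth}(\bx)$, plugging this into \eqref{eq:deff} gives
\bee\label{plan:decomp}
\op(b)u(\bx)=\sum_{\bth\in\hat\Bth}\be_{\bth}(\bx)\,\frac{1}{(2\pi)^{d/2}}\int \hat b(\bth,\bxi)e^{i\bxi\bx}(\mathcal F u)(\bxi)\,d\bxi.
\ene
The inner operator is, up to the modulation by $\be_{\bth}(\bx)$, a Fourier multiplier with symbol $\hat b(\bth,\cdot)$; multiplication by the unimodular function $\be_{\bth}$ is an isometry on both $L_2(\R^d)$ and $B_2(\R^d)$ (on $B_2$ it simply shifts the frequencies $\bth_j\mapsto\bth_j+\bth$ and preserves $\sum|c_j|^2$). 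Hence each summand has operator norm bounded by $\sup_{\bxi}|\hat b(\bth,\bxi)|$.

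The next step is to avoid the crude triangle-inequality bound $\sum_{\bth}\sup_{\bxi}|\hat b(\bth,\bxi)|$, which would only give a constant times $\1 b\1^{(0)}_{l,0}$ for some $l$, and instead obtain the sharp constant $\1 b\1^{(0)}_{0,0}=\sum_{\bth}\sup_{\bxi}|\hat b(\bth,\bxi)|$ — wait, in fact with $l=0$ the quantity $\1 b\1^{(0)}_{0,0}$ \emph{is} exactly $\sum_{\bth\in\hat\Bth}\sup_{\bxi}|\hat b(\bth,\bxi)|$ (taking $\lu\bth\ru^0=1$, $s=0$, $\a=0$). So the triangle inequality applied to \eqref{plan:decomp} together with the per-summand bound above already yields
\bee
\|\op(b)u\|\le\sum_{\bth\in\hat\Bth}\sup_{\bxi}|\hat b(\bth,\bxi)|\;\|u\|=\1 b\1^{(0)}_{0,0}\,\|u\|,
\ene
in both spaces. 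One should verify that the interchange of the (finite or absolutely convergent) $\bth$-sum with the Fourier integral is legitimate: for $u\in\plainS(\R^d)$ the integral $\int\hat b(\bth,\bxi)e^{i\bxi\bx}(\mathcal F u)(\bxi)d\bxi$ converges absolutely and is bounded in $\bx$ by $\sup_\bxi|\hat b(\bth,\bxi)|\,\|\mathcal F u\|_{L_1}$, and $\sum_\bth\sup_\bxi|\hat b(\bth,\bxi)|<\infty$ by hypothesis, so dominated convergence justifies \eqref{plan:decomp}; density of $\plainS(\R^d)$ in $L_2(\R^d)$ (resp. of trigonometric polynomials in $B_2(\R^d)$) then extends the bound to the whole space.

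I expect the only genuine subtlety to be the $B_2(\R^d)$ case: one must check that the action of $\op(b)$ on a Besicovitch element $\sum_j c_j\be_{\bth_j}$, defined via the same Fourier-integral-operator formula (i.e. $\op(b)\be_{\bth_j}=\sum_{\bth\in\hat\Bth}\hat b(\bth,\bth_j)\be_{\bth+\bth_j}$), really does land in $B_2$ with the stated norm control. Here the key computation is that, for each fixed $\bth\in\hat\Bth$, the map $\sum_j c_j\be_{\bth_j}\mapsto\sum_j c_j\hat b(\bth,\bth_j)\be_{\bth+\bth_j}$ has $\ell_2\to\ell_2$ norm $\le\sup_\bxi|\hat b(\bth,\bxi)|$ because the coefficient sequence is pointwise-multiplied by $(\hat b(\bth,\bth_j))_j$ and then the exponentials are merely relabelled injectively; summing over $\bth$ with the triangle inequality in $B_2$ closes the argument. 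This is routine but is the step where the two-space claim genuinely needs the Besicovitch formalism described earlier in the paper; everything else is the standard modulation-plus-multiplier estimate.
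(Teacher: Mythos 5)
Your proposal is correct, and it is essentially the standard argument this paper relies on: the paper omits the proof, deferring to \cite{Sob}, where boundedness is obtained exactly by this decomposition of $\op(b)$ into modulations $\be_{\bth}$ composed with Fourier multipliers $\hat b(\bth,\cdot)$, each bounded by $\sup_{\bxi}|\hat b(\bth,\bxi)|$, summed over the (finite) frequency set. Your treatment of the $B_2$ case — pointwise multiplication of the coefficient sequence plus an injective relabelling of exponentials — is the right verification there as well.
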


Since $\op(b) u\in\plainS(\Rd)$ for any $b\in\BS_{\a}$ and $u\in
\plainS(\Rd)$, the product $\op(b) \op(g)$, $b\in \BS_{\a}(\hat{\Bth}_1), g\in
\BS_{\g}(\hat{\Bth}_2)$, is well defined on $\plainS(\Rd)$. A straightforward
calculation leads to the following formula for the symbol $b\circ g
$ of the product $\op(b)\op(g)$:
\begin{equation*}
(b\circ g)(\bx, \bxi) = \sum_{\bth\in{\Bth}_1,\, \bphi\in{\Bth}_2}
\hat b(\bth, \bxi +\bphi) \hat g(\bphi, \bxi) e^{i(\bth+\bphi)\bx},
\end{equation*}
and hence
\begin{equation}\label{prodsymb:eq}
\widehat{(b\circ g)}(\bchi, \bxi) =
\sum_{\bth +\bphi = \bchi} \hat b (\bth, \bxi +\bphi) \hat g(\bphi,
\bxi),\ \bchi\in{\Bth}_1+{\Bth}_2,\ \bxi\in \Rd.
\end{equation}

We have

\begin{prop}\label{product:prop}
Let $b\in\BS_{\a}({\Bth}_1)$,\ $g\in\BS_{\g}({\Bth}_2)$. Then $b\circ g\in\BS_{\a+\g}({\Bth}_1+{\Bth}_2)$
and
\begin{equation*}
\1 b\circ g\1^{(\a+\g)}_{l,s} \le C \1 b\1^{(\a)}_{l,s} \1 g\1^{(\g)}_{l+(|\a|+s)\beta,s},
\end{equation*}
with a constant $C$ depending only on $l,\ \alpha,\ s$.
\end{prop}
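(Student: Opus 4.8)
The statement to prove is Proposition~\ref{product:prop}, the composition estimate for the symbol classes $\BS_\alpha$.

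\medskip

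The plan is to work directly from the explicit formula \eqref{prodsymb:eq} for the composed symbol $\widehat{(b\circ g)}(\bchi,\bxi)=\sum_{\bth+\bphi=\bchi}\hat b(\bth,\bxi+\bphi)\hat g(\bphi,\bxi)$ and to estimate the norm $\1 b\circ g\1^{(\alpha+\gamma)}_{l,s}$ term by term. First I would fix a multi-index $\bs$ with $|\bs|\le s$ and apply the Leibniz rule to $\BD_\bxi^\bs\big(\hat b(\bth,\bxi+\bphi)\hat g(\bphi,\bxi)\big)$, writing it as a sum over $\bs'+\bs''=\bs$ of $(\BD_\bxi^{\bs'}\hat b)(\bth,\bxi+\bphi)\,(\BD_\bxi^{\bs''}\hat g)(\bphi,\bxi)$. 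Then I would multiply by $\lu\bchi\ru^l$ and by the weight $\lu\bxi\ru^{(-\alpha-\gamma+|\bs|)\beta}$ demanded in \eqref{1b1:eq}, and bound $\lu\bchi\ru^l=\lu\bth+\bphi\ru^l\le C\lu\bth\ru^l\lu\bphi\ru^l$ by the triangle inequality for $\lu\cdot\ru$. The summation over $\bth+\bphi=\bchi$ followed by summation over $\bchi\in\Bth_1+\Bth_2$ is just the double sum over $\bth\in\Bth_1$, $\bphi\in\Bth_2$, so the two $\bth$-weight and $\bphi$-weight factors can be distributed to the $b$-factor and $g$-factor respectively.

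\medskip

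The one genuine point requiring care is the splitting of the $\bxi$-weight, because $\hat b$ is evaluated at the shifted point $\bxi+\bphi$ while $\hat g$ is at $\bxi$. Here I would use the elementary Peetre-type inequality $\lu\bxi\ru\le C\lu\bxi+\bphi\ru\lu\bphi\ru$ (and its reverse), valid with a constant independent of everything, to trade $\lu\bxi\ru^{\text{power}}$ for $\lu\bxi+\bphi\ru^{\text{power}}\lu\bphi\ru^{|\text{power}|}$; the resulting extra factor $\lu\bphi\ru^{(|\alpha|+s)\beta}$ (coming from the worst case of the exponents appearing) is exactly what is absorbed into the shifted first index $l+(|\alpha|+s)\beta$ of the $\1 g\1$-norm on the right-hand side of the claimed inequality. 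After this juggling, the $b$-factor contributes $\lu\bth\ru^l\sup_\bxi\lu\bxi+\bphi\ru^{(-\alpha+|\bs'|)\beta}|\BD_\bxi^{\bs'}\hat b(\bth,\bxi+\bphi)|$, whose supremum over $\bxi$ is the same as over $\bxi$ (a translation), and summing over $\bth$ gives at most $\1 b\1^{(\alpha)}_{l,|\bs'|}\le\1 b\1^{(\alpha)}_{l,s}$; the $g$-factor contributes, after summing over $\bphi$, at most $\1 g\1^{(\gamma)}_{l+(|\alpha|+s)\beta,|\bs''|}\le\1 g\1^{(\gamma)}_{l+(|\alpha|+s)\beta,s}$.

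\medskip

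Finally I would collect: the finitely many Leibniz terms (their number bounded in terms of $s$ and $d$) each contribute a product $\1 b\1^{(\alpha)}_{l,s}\1 g\1^{(\gamma)}_{l+(|\alpha|+s)\beta,s}$ up to a constant depending only on $l,\alpha,s$, which yields $\1 b\circ g\1^{(\alpha+\gamma)}_{l,s}\le C\1 b\1^{(\alpha)}_{l,s}\1 g\1^{(\gamma)}_{l+(|\alpha|+s)\beta,s}$ and in particular shows the sum defining the norm is finite, so $b\circ g\in\BS_{\alpha+\gamma}(\Bth_1+\Bth_2)$. The main (and really only) obstacle is bookkeeping the exponent shift in the $\bxi$-variable correctly so that the $\lu\bphi\ru$-powers generated by the Peetre inequality do not exceed $(|\alpha|+s)\beta$; once one checks that the worst exponent appearing among $|-\alpha+|\bs'||\beta$ over $|\bs'|\le s$ is at most $(|\alpha|+s)\beta$, everything else is routine. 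This is, as the paper notes, essentially the argument of the corresponding statement in \cite{Sob}, so I would present it concisely and refer there for the remaining details if needed.
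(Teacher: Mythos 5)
Your argument is correct: the Leibniz expansion of $\BD^\bs_\bxi$ applied to \eqref{prodsymb:eq}, the splitting $(-\a-\g+|\bs|)\beta=(-\a+|\bs'|)\beta+(-\g+|\bs''|)\beta$, the Peetre-type trade $\lu\bxi\ru^{t}\le C\lu\bxi+\bphi\ru^{t}\lu\bphi\ru^{|t|}$ applied only to the $b$-factor (so that the worst exponent $|-\a+|\bs'||\beta\le(|\a|+s)\beta$ is absorbed into the shifted index of the $\1 g\1$-norm, with no $\g$-dependence in the constant), and $\lu\bth+\bphi\ru^l\le C\lu\bth\ru^l\lu\bphi\ru^l$ together give exactly the claimed bound. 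The paper itself omits the proof, deferring to the analogous statement in \cite{Sob}, and your computation is precisely that standard argument, so there is nothing to add.
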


\subsection{Gauge transform and the symbol of the resulting operator}
From now on we fix $\b:\ 0<\beta<\alpha_1$. The symbols we are going to construct will depend on $\rho_n$; this dependence will usually be omitted from the notation.

Let $\iota\in \plainC\infty(\R)$ be a non-negative function such
that
\begin{equation}\label{eta:eq}
0\le\iota\le 1,\ \ \iota(z) =
\begin{cases}
& 1,\  z \le \frac{1}{4};\\
& 0,\  z \ge \frac{1.1}{4}.
\end{cases}
\end{equation}
For $\bth\in \Bth, \bth\not = \mathbf 0$,
define the following $\plainC\infty$-cut-off functions:
\begin{equation}\label{el:eq}
\begin{cases}
e_{\bth}(\bxi) =&\ \iota\biggl({\biggl|\dfrac{|\bxi
+\bth/2|-3\rho_n}{10\rho_n}\biggr|}
 \biggr),\\[0.5cm]
\varphi_{\t}(\bxi) = &\  1 - \iota\biggl(\dfrac{|\lu\bth,\bxi + \bth/2\ru|}
{\rho_n^\beta |\bth|}\biggr).
\end{cases}
\end{equation}
The function $e_{\bth}$ is supported
in the shell $\rho_n/4\le |\bxi+\bth/2|\le 23\rho_n/4$. We point out that
\begin{equation}\label{symmetry:eq}
e_{\bth}(\bxi) = e_{-\bth}(\bxi + \bth),\ \ \  
\varphi_{\bth}(\bxi) =  \varphi_{-\bth}(\bxi + \bth).
\end{equation}
Note that the above functions satisfy the estimates
\begin{equation}\label{varphi:eq}
|\BD^{\bs}_{\bxi} e_{\bth}(\bxi)|
+
|\BD^{\bs}_{\bxi}\varphi_{\bth}(\bxi)|  \ll \rho_n^{-\beta |\bs|}.
\end{equation}
As before, we assume that $\tilde k$ is fixed. Put
$$
\tilde{\chi}_{\bth}(\bxi):=e_{\bth}(\bxi)\varphi_{\bth}(\bxi)(|\bxi+\bth|^2-|\bxi|^2)^{-1}=\frac{e_{\bth}(\bxi)\varphi_{\bth}(\bxi)}
{2\lu\bth,\bxi+\frac{\bth}{2}\ru}
$$
when $\bth\not={\bf 0}$, and $\tilde{\chi}_{\bf 0}(\bxi)=0$.

\bet\label{kandibober}
We can find a unitary operator $U$ and self-adjoint operators $H_1$ and $H_2$ such that the following properties hold:

1. $H_1=U^{-1}HU$;

2. $||H_1-H_2||\le \rho_n^{-2\beta\tilde k}$;

3. $H_2=-\Delta+W$, where $W=W_{\tilde k}$ is the operator with symbol
$w=w_{\tilde k}(\bx,\bxi)$ and $w$ satisfies the following property:

\bee\label{eq:b3}
\hat w(\bth,\bxi)=0, \ \mathrm{if} \
(\bxi\not\in\La(\bth)\ \&\ \bxi\in\CA), \  \mathrm{or} \ (\bxi+\bth\not\in\La(\bth)\ \&\ \bxi\in\CA),  \  \mathrm{or} \
(\bth\not\in\Bth_{\tilde k});
\ene

4. $U=e^{i\Psi}$, where $\Psi = \sum_{j=1}^{\tilde k} \Psi_j,\ \Psi_j = \op(\psi_j)$. Moreover, $\psi_j,\, b_j,\,t_j\in\BS_\gamma(\beta)$ for any $\gamma\in\R$, $\Psi$ is a bounded operator, and

\bee\label{estpsi}
\1\psi_j\1^{(\gamma)}_{l,s}\leq C_j\rho_n^{\beta(1-\gamma-2j)}\left(\1 b\1^{(0)}_{l_j,s_j}\right)^j,\ \ j\geq 1.
\ene
Assuming $\rho_0$ is large enough (depending on $l,s,\gamma,b$ and ${\tilde k}$), we get
\bee\label{estpsitotal}
\1\psi\1^{(\gamma)}_{l,s}\ll\rho_n^{-\beta(1+\gamma)}\1 b\1^{(0)}_{l,s};
\ene

5. The symbol of $W$ satisfies

\bee
\hat w_{\tilde k}(\bth,\bxi)=\hat y_{\tilde k}(\bth,\bxi)(1-e_{\bth}(\bxi)\varphi_{\bth}(\bxi)),
\ene
where $\hat{y}_{\tilde k}(\bth,\bxi)=0$ for $\bth\not\in\Bth_{\tilde k}$. Otherwise,
\begin{equation}\label{symbol}
\begin{split}
&\hat{y}_{\tilde k}(\bth,\bxi)=\hat{b}(\bth)+\sum\limits_{s=1}^{{\tilde k}-1}\sum  C_s(\bth,\bxi)\hat{b}(\bth_{s+1})\prod\limits_{j=1}^s \hat{b}(\bth_j)\tilde{\chi}_{\bth_j'}(\bxi+\bphi_j')\cr
&=\hat{b}(\bth)+\sum\limits_{s=1}^{{\tilde k}-1}\sum
C_s(\bth,\bxi)\hat{b}(\bth_{s+1})\prod\limits_{j=1}^s \hat{b}(\bth_j)\frac{e_{\bth_j'}(\bxi+\bphi_j')\varphi_{\bth_j'}(\bxi+\bphi_j')}
{2\lu\bth_j',\bxi+\bphi_j'+\frac{\bth_j'}{2}\ru},
\end{split}
\end{equation}
where the second sums are taken over all $\bth_j\in\Bth$,
$\bth_j',\bphi_j'\in\Bth_{s+1}$ and
\begin{equation}\label{indsconst}
C_s(\bth,\bxi)=\sum\limits_{p=0}^s\sum\limits_{\bth_j '',\bphi_j
''\in\Bth_{s+1}\ (1\leq j\leq p)}C_s^{(p)}(\bth) \prod\limits_{j=1}^p
e_{\bth_j ''}(\bxi+\bphi_j '')\varphi_{\bth_j ''}(\bxi+\bphi_j '').
\end{equation}
Here $C_s^{(p)}(\bth)$ depend on $s,\ p$ and all vectors
$\bth,\bth_j,\bth_j',\bphi_j',\bth_j '',\bphi_j ''$. At the
same time, coefficients $C_s^{(p)}(\bth)$ can be bounded uniformly
by a constant which depends on $s$ only. We apply the convention that $0/0=0$. 

\ent

The next results only partially have been proved explicitly in \cite{PaSh1}, but follow easily from the previous Theorem.

\bet\label{kandibober1}

1. We have $\hat{\psi}_m(\bth,\bxi)=0$ for $\bth\not\in\Bth_m$. Otherwise,
\begin{equation}\label{inds1}
\hat{\psi}_m(\bth,\bxi)=\sum C_m'(\bth,\bxi)\prod\limits_{j=1}^m \hat{b}(\bth_j)\tilde{\chi}_{\bth_j'}(\bxi+\bphi_j'),
\end{equation}
where the sum is taken over all $\bth_j\in\Bth$,
$\bth_j',\bphi_j'\in\Bth_m$ and $C_m'(\bth,\bxi)$ admit representation similar to \eqref{indsconst}.

2. We have 
\begin{equation}\label{UU}
U=I+\sum\limits_{m=1}^{\tilde k}\Psi'_m + U_{\tilde k+1},\ \ \ \ \ \Psi'_m={\rm Op}(\psi'_m),
\end{equation}
where symbols $\psi'_m$ admit representation similar to \eqref{inds1} and the estimate slightly worse than \eqref{estpsi}:
\bee\label{esttildepsi}
\1\psi'_m\1^{(\gamma)}_{l,s}\leq C_m\rho_n^{-\beta(m+\gamma)}
\left(\1 b\1^{(0)}_{l_m,s_m}\right)^m,\ \ m\geq 1. 
\ene
The error term $U_{\tilde k+1}$ admits the estimate
\bee\label{esterrorU}
\1 u_{{\tilde k}+1}\1^{(\gamma)}_{l,s}\ll \rho_n^{-\beta({\tilde k}+1+\gamma)}
\left(\1 b\1^{(0)}_{l_{{\tilde k}+1},s_{{\tilde k}+1}}\right)^{{\tilde k}+1}.
\ene
\ent
\bep
The first statement was proven in \cite{PaSh1}. The second statement follows by expanding the exponential $U=e^{i\Psi}$ into the Taylor series and estimating each term using \eqref{estpsi} and Proposition~\ref{product:prop}.
\enp
\bel\label{cor:4.1}
Fix $s\leq\beta\tilde k/4$. Then
\bee\label{4.1raz}
\|(H_1-H_2)(-\Delta+1)^s\|\leq \rho_n^{-\beta\tilde k}.
\ene
As a consequence, the estimate \eqref{4.1} holds.
\enl
\bep
Though the estimate \eqref{4.1raz} was not written in \cite{PaSh1} explicitly, it easily follows from the proofs of Theorems \ref{kandibober} and \ref{kandibober1} there.
\enp
\bec\label{cor:Udelta} 
As a function of $\bxi$, $\CF (U\de_{\bx_0})$ is uniformly bounded 
over $\bx_0$ (and $\bxi$), with constant depending only on $\tilde k$. 
Moreover, we have
\begin{equation}\label{fourierU}
\begin{split}
\CF(U\delta_{\bx_0})(\bxi)=(2\pi)^{-d/2}e^{-i\langle\bxi,\bx_0\rangle}+(2\pi)^{-d/2}\sum\limits_{m=1}^{\tilde k}\sum\limits_{\bth\in\Bth_m}\hat{\psi'}_m(\bth,\bxi-\bth)e^{-i\langle\bxi-\bth,\bx_0\rangle}+R(\bxi;\bx_0)=:\cr  
u(\bxi;\bx_0)+R_{\bx_0},
\end{split}
\end{equation}
where 
\begin{equation}\label{errnorm}
\|R_{\bx_0}\|
\leq \1 u_{{\tilde k}+1}\1^{(-\frac{d+1}{2\beta})}_{0,0}\ll 
\rho_n^{-\beta({\tilde k}+1)+\frac{d+1}{2}}\left(\1 b\1^{(0)}_{l_{{\tilde k}+1},s_{{\tilde k}+1}}\right)^{{\tilde k}+1}.
\end{equation}
uniformly over $\bx_0$.
\enc
\bep
Use \eqref{esterrorU} with $\gamma=-(d+1)/(2\beta)$, $l=s=0$.
\enp

\section{Proof of the main results}

Now we carry on with the proof using the gauge operators from the previous section. 
For any set $\CC\subset\R^d$ by $\CP^B(\CC)$ we denote  the orthogonal projection onto
$\mathrm{span}\{\be_{\bxi}\}_{\bxi\in\CC}$ in $B_2(\R^d)$ and by $\CP(\CC)=\CP^L(\CC)$ the same projection in $L_2(\R^d)$, i.e.
\bee
\CP(\CC)=\CF^*\chi_{\CC}\CF,
\ene
where $\CF$ is the Fourier transform and $\chi_{\CC}$ is the operator of multiplication
by the characteristic function of $\CC$.
Obviously, $\CP(\CC)$ is a well-defined (resp. non-zero) projection iff
$\CC$ is measurable (resp. has non-zero measure). 

Let $N$ be fixed (this is the number of precise asymptotic terms we need to obtain) and let $\la=\rho^2$ with $\rho\in I_n$ and $n$ being fixed (and large).  
We start by using Theorem \ref{kandibober} and find operators $U$, $H_1$ and $H_2$ with the properties listed there and put $\ep:=\rho_n^{-\beta\tilde k}$. The value of $\tilde k$ will be chosen later; it will depend on $N$. We are interested in the value of the kernel of the spectral projection $e(\la;H;\bx_0,\by_0)$, when $\bx_0$ and $\by_0$ are fixed. We can write this value, at least formally, as
\bee\label{e}
e(\la;H;\bx_0,\by_0)=(E_{\la}(H)\de_{\bx_0},\de_{\by_0})=
(E_{\la}(H)(E_{\la}(H)\de_{\bx_0}),E_{\la}(H)\de_{\by_0}).
\ene
As we have seen in Lemma \ref{L2norm}, both $E_{\la}(H)\de_{\bx_0}$ and $E_{\la}(H)\de_{\by_0}$ are elements of $L_2$, so the inner product in the RHS of \eqref{e} can be understood in the usual $L_2$ sense. Since $H_1=UHU^{-1}$, we have $E_{\la}(H)=U^{-1}E_{\la}(H_1)U$, and so 
\bee\label{e1}
e(\la;H;\bx_0,\by_0)=(E_{\la}(H_1)U\de_{\bx_0},U\de_{\by_0}). 
\ene
Also, 
\bee
||E_{\la}(H_1)U\de_{\bx_0}||=||UE_{\la}(H)\de_{\bx_0}||=||E_{\la}(H)\de_{\bx_0}||,
\ene
so $E_{\la}(H_1)U\de_{\bx_0}$ belongs to $L_2$ and (see Lemma \ref{L2norm})
\bee\label{budet}
\|E_{\la}(H_1)U\de_{\bx_0}\|=O(\la^{d/4}).
\ene 
We apply Lemma \ref{lem:2} with $\de:=\ep^{1/2}$ and obtain
\bee\label{eq:e2}
\begin{split}& 
||E_{\la}(H_2)U\de_{\bx_0}-E_{\la}(H_1)U\de_{\bx_0}||\ll ||E([\la-\ep^{1/2},\la+\ep^{1/2}];H_2)U\de_{\bx_0}||+\cr & \ep^{1/2}||E((-\infty,\la];H_2)U\de_{\bx_0}||+
\ep^{1/2}||(H_2+(1-a)I)^{-s}U\de_{\bx_0}||
\end{split}
\ene
and
\bee\label{eq:e2'}
\begin{split}& 
||E_{\la}(H_2)U\de_{\bx_0}-E_{\la}(H_1)U\de_{\bx_0}||\ll ||E([\la-\ep^{1/2},\la+\ep^{1/2}];H_1)U\de_{\bx_0}||+\cr & \ep^{1/2}||E((-\infty,\la];H_1)U\de_{\bx_0}||+
\ep^{1/2}||(H_1+(1-a)I)^{-s}U\de_{\bx_0}||.
\end{split}
\ene
Now, we choose $s>d/4$ so that (see Corollary \ref{cor:Udelta}) $(-\De+I)^{-s}U\de_{\bx_0}$ belongs to $L_2$ and
\bee
||(-\De+I)^{-s}U\de_{\bx_0}||\ll 1.
\ene
Then clearly we have 
\bee
||(H_j+(1-a)I)^{-s}U\de_{\bx_0}||\ll 1,\ \ \ j=1,2.
\ene
Therefore, 
the last terms in \eqref{eq:e2} and \eqref{eq:e2'} are $O(\ep^{1/2})$. 
We also notice that \eqref{budet} and \eqref{eq:e2'} imply
\bee\label{budet1}
\|E_{\la}(H_2)U\de_{\bx_0}\|=O(\la^{d/4}).
\ene 
What all this means is that it is enough to obtain precise asymptotics for the spectral projection of $H_2$, and this is what we will concentrate on now. Moreover, instead of studying
$(E_{\la}(H_2)U\de_{\bx_0},U\de_{\by_0})$ for $\la\in I_n$, we will 
study 
\bee\label{eq:tildee}
e(\la',\la'';H_2;U;\bx_0,\by_0):=(E([\la',\la''];H_2)U\de_{\bx_0},U\de_{\by_0})
\ene
for $\la'=(\rho')^2$, $\la''=(\rho'')^2$, $\rho',\rho''\in I_n$ with $\la''\ge\la'$. As we will see later, the following estimate holds: 
\bee\label{nuzhno}
||E([\la-\ep^{1/2},\la+\ep^{1/2}];H_2)U\de_{\bx_0}||^2=e(\la-\ep^{1/2},\la+\ep^{1/2};H_2;U;\bx_0,\bx_0)\ll\ep^{1/2}\rho_n^d.
\ene
In fact, \eqref{nuzhno} is an immediate corollary of \eqref{alreadyhere}.
Now \eqref{nuzhno}, \eqref{budet1}, and \eqref{eq:e2} imply 
\bee\label{eq:e3}
||E_{\la}(H_2)U\de_{\bx_0}-E_{\la}(H_1)U\de_{\bx_0}||\ll \ep^{1/4}\rho_n^{d/2}.
\ene
By choosing $\tilde k>4(M+2d)/\beta$, we make sure that the RHS of \eqref{eq:e3} is $O(\rho_n^{-M-d})$ and, therefore,  
\bee\label{eq:e4}
\begin{split}& 
e(\la;H;\bx_0,\by_0)=(E_{\la}(H_1)U\de_{\bx_0},E_{\la}(H_1)U\de_{\by_0})=\cr & (E_{\la}(H_2)U\de_{\bx_0},E_{\la}(H_2)U\de_{\by_0})+
O(\rho_n^{-M}).  
\end{split}
\ene
Here we also used \eqref{budet} and \eqref{budet1}.

Therefore, from now on we discuss only the spectral projections of $H_2$. Condition \eqref{eq:b3} implies that
for each $\bxi\in\CA$ the subspace $\CP^B(\BUps(\bxi))\GH$ is an invariant
subspace of $H_2$ acting in $B_2(\R^d)$. When we consider the action of $H_2$ in $L_2(\R^d)$, this subspace becomes trivial, so in order to get an interesting invariant subspace in $L_2$, we need to integrate $\CP(\BUps(\bxi))\GH$ over $\bxi$ in some open domain. For example, the subspace $\CP^L(\CA)L_2(\R^d)$ is the invariant subspace in $L_2$. 
We denote the 
dimension of $\CP^B(\BUps(\bxi))\GH$ by $q=q(\bxi)$ (which is finite by Lemma \ref{lem:finiteBUps}) and put
\bee\label{eq:h3bxi}
H_2(\bxi)=H_2(\BUps(\bxi)):=H_2\bigm|_{\CP^B(\BUps(\bxi))\GH}.
\ene
This operator acts in a finite-dimensional space $\GH_{\bxi}:=\CP^B(\BUps(\bxi))\GH$, so
its spectrum is purely discrete; we denote its eigenvalues (counting
multiplicities) by
$\la_1(\BUps(\bxi))\le\la_2(\BUps(\bxi))\le\dots\le\la_q(\BUps(\bxi))$
and the corresponding orthonormalized eigenfunctions by
$\{h_{j,\BUps(\bxi)}(\bx)\}$. Next, we list all points
$\boldeta\in\BUps(\bxi)$ in increasing order of their absolute
values (and if the absolute values are equal, we label them in any reasonable way we want, for example, in the lexicographic order of their coordinates). In such a way, we have put into correspondence to each point
$\boldeta\in\BUps(\bxi)$ a natural number $t=t(\boldeta)$ so that
$t(\boldeta)< t(\boldeta')$ if $|\boldeta|< |\boldeta'|$. Now we define the
mapping $g:\CA\to\R$ which puts into correspondence to each point
$\boldeta\in\CA$ the number $\la_{t(\boldeta)}(\BUps(\boldeta))$.  Similarly, we define the mapping $h:\CA\to
B_2(\R^d)$ by the formula $h_{\bxi}:=h_{t(\bxi),\BUps(\bxi)}$. Then for each $\bxi\in\CA$ the expression
$(2\pi)^{-d}\sum_{\boldeta\in\BUps(\bxi)}h_{\boldeta}(\bx)\overline{h_{\boldeta}(\by)}$
is the integral kernel of the projection $\CP(\BUps(\bxi))$. 
When $\bxi\not\in\CA$, we put
$g(\bxi):=|\bxi|^2$ and $h_{\bxi}:=\be_{\bxi}$, so that
now the functions $g$ and $h$ are defined on all $\R^d$. Denote 
\bee\label{eq:mla} 
G_{\la}:=\{\bxi\in\R^d,\,g(\bxi)\le\la\}. 
\ene
It
has been shown in \cite{PaSh1} that $\{h_{\bxi}\}_{\bxi\in\R^d}$ is
an orthonormal basis in $B_2(\R^d)$. Moreover, for each $\lambda\in
[0.75\la_n,17\la_n]$ the function
\bee\label{eq:projection}
e(\la;H_2;\bx,\by):=(2\pi)^{-d}\int_{G_{\la}}h_{\bxi}(\bx)
\overline{h_{\bxi}(\by)}d\bxi,\,\,\bx,\by\in\R^d,
\ene
is the
integral kernel of the spectral projection $E_{\la}(H_2)$
of the operator $H_2$ in $L_2(\R^d)$. Formula \eqref{eq:projection} was used in \cite{PaSh1} to compute the IDS of the operator $H$. It will be more convenient for us to use \eqref{eq:tildee} and \eqref{spectral} below  to compute the spectral function of $H$. 

Since each space $\CP(\Bxi(\GV))\GH$ (and even each component 
$\CP(\Bxi(\GV)_p)\GH$)
is invariant under $H_2$, they are invariant for $E_{\la}(H_2)$ as well, and therefore, if we denote  
\bee
E(\GV;p;I;H_2):=\CP(\Bxi(\GV)_p)E(I;H_2)\CP(\Bxi(\GV)_p), 
\ene
we have
\bee\label{eq:tildee1}
\bes
e(\la',\la'';H_2;U;\bx_0,\by_0)&=\sum_{\GV\in\CV}\sum_p(E(\GV;p;[\la',\la''];H_2)U\de_{\bx_0},U\de_{\by_0})\\
&=:\sum_{\GV\in\CV}\sum_p e_{\GV;p}(\la',\la'';H_2;U;\bx_0,\by_0),
\end{split}
\ene
the sum being over all lattice subspaces of dimension at most $d-1$ (the contribution from $\Bxi(\R^d)$ is zero for large $\rho_n$ since
$\Bxi(\R^d)$ is bounded). 

Let us fix the connected component $\Bxi(\GV)_p$ of the resonance zone and study the contribution from it to the RHS of \eqref{eq:tildee1}. 
Suppose that two points $\bxi$ and $\bxi'$ have the
same coordinates $X$ and $\tilde\Phi$ and different coordinates $r$ and $r'$. Then (see \cite{PaSh1}) $\bxi\in\Bxi_p$ implies $\bxi'\in\Bxi_p$ and
$\BUps(\bxi')=\BUps(\bxi)+(\bxi'-\bxi)$. This shows that two spaces $\GH_r(X,\tilde\Phi):=\GH_{\bxi}$
and $\GH_{r'}(X,\tilde\Phi):=\GH_{\bxi'}$ have the same dimension and, moreover,
there is a natural isometry
$F_{\bxi,\bxi'}:\GH_{\bxi}\to\GH_{\bxi'}$ given by $F:\be_{\bnu}\mapsto\be_{\bnu+(\bxi'-\bxi)}$,
$\bnu\in\BUps(\bxi)$. This isometry allows us to `compare' operators acting in $\GH_{\bxi}$ and
$\GH_{\bxi'}$. Therefore, abusing slightly our notation, we can assume that $H_2(\bxi)$ and $H_2(\bxi')$
act in the same (finite dimensional) Hilbert space $\GH(X_1,\dots,X_m,\tilde\Phi_1,\dots,\tilde\Phi_{K})$.
We fix the values $(X_1,\dots,X_m,\tilde\Phi_1,\dots,\tilde\Phi_{K})$, $X=(X_1,\dots,X_m)\in\Om(\GV)$, $\tilde\Phi=(\tilde\Phi_1,\dots,\tilde\Phi_{K})\in M_p$  
and study how these
operators depend on $r$. We denote by $H_2(r)=H_2(r;X_1,\dots,X_m,\tilde\Phi_1,\dots,\tilde\Phi_{K})$ the operator
$H_2(\bxi)$ with $\bxi=(X_1,\dots,X_m,r,\tilde\Phi_1,\dots,\tilde\Phi_{K})$, acting in $\GH(X_1,\dots,X_m,\tilde\Phi_1,\dots,\tilde\Phi_{K})$. Sometimes we will still want to emphasise that those operators for each $r$ act in their own Hilbert spaces; then, we will use the notation $\tilde H_2(r)$. Thus, $H_2(r)$ acts in $\GH(X_1,\dots,X_m,\tilde\Phi_1,\dots,\tilde\Phi_{K})$ and $\tilde H_2(r)$ acts in $\GH_r(X_1,\dots,X_m,\tilde\Phi_1,\dots,\tilde\Phi_{K})$. 
We will also use the notation $\int\limits_{[a,b]}^{\oplus}\, \tilde H_2(r)\,dr$ for the direct integral of these operators (emphasising that each `fiber' operator $\tilde H_2(r)$ acts in its own Hilbert space 
$\GH_{r}$).

As we have seen in Theorem \ref{kandibober},
the symbol of the operator $H_2$ satisfies
\bee\label{eq:nn1}
h_2(\bx,\bxi)=|\bxi|^2+{w}_{\tilde k}(\bx,\bxi)=r^2+2r\lu\ba,\bn(\boldom)\ru+|\ba|^2+{w}_{\tilde k}(\bx,\bxi)+|X|^2,
\ene
where
the Fourier coefficients of $w_{\tilde k}$ satisfy 
\eqref{symbol}, \eqref{indsconst}, and we have denoted  
$\boldom=\bxi_{\GV^{\perp}}-\ba$. This immediately implies that
the operator $H_2(r)$ is monotone increasing in $r$; in particular, all its eigenvalues
$\la_j(H_2(r))$ are increasing in $r$. Thus, the function $g(\bxi)$ is an increasing function 
of $r$ if we fix other coordinates of $\bxi$, so the equation
\bee\label{eq:tau}
g(\bxi)=\rho^2
\ene
has a unique solution if we fix the values $(X_1,\dots,X_m,\tilde\Phi_1,\dots,\tilde\Phi_{K})$; we
denote the $r$-coordinate of this solution by $\tau=\tau(\rho)=\tau(\rho;X_1,\dots,X_m,\tilde\Phi_1,\dots,\tilde\Phi_{K})$,
so that
\bee\label{eq:tau1}
g(\bxi(X_1,\dots,X_m,\tau,\tilde\Phi_1,\dots,\tilde\Phi_{K}))=\rho^2.
\ene

 Notice that the operator $H_2(r)$ will stay the same if we replace the coordinates $X$ to $Y$ in such a way that the resulting point 
$\boldeta=(Y_1,\dots,Y_m,r,\tilde\Phi_1,\dots,\tilde\Phi_{K})$ is equivalent to $\bxi$ so that $\BUps(\bxi)=\BUps(\boldeta)$. In this case we will say that $X$ and $Y$ are equivalent, $X\leftrightarrow Y$. We also 
denote $\tilde\Omega(\GV):=\Omega(\GV)/\leftrightarrow$. If we fix the class of equivalence $\tilde X$ from $\tilde\Omega(\GV)$ and $\tilde\Phi$, 
there will be $q$ representatives $X^{(1)},\dots,X^{(q)}$ from the class $\tilde X$. Then we will have $q$ numbers $\tau(\rho;X^{(j)},\tilde\Phi)$, $j=1,\dots,q$ corresponding to a single pair $(\tilde X,\tilde\Phi)$; we will label them in the increasing order and denote $\tau_j(\rho;\tilde X,\tilde\Phi)$, $j=1,\dots,q$.

Let us denote by $S=S(r)$ the operator with symbol $2r\lu\ba,\bn(\boldeta)\ru+|\ba|^2+{w}_{\tilde k}(\bx,\bxi)+|X|^2$ acting in $\GH(X_1,\dots,X_m,\tilde\Phi_1,\dots,\tilde\Phi_{K})$, so that
$H_2(r)=r^2 I+S(r)$. Then \eqref{symbol} implies
\bee\label{eq:newS1}
||S(r)||\ll\rho_n^{1+\al_{d}},\ \ ||S'(r)||\ll \rho_n^{\al_{d}},
\ene
and
\bee\label{eq:newS2}
||\frac{d^l}{dr^l}S(r)||\ll \rho_n^{-l},\ \ l\ge 2.
\ene
The operator $H_2(r)$ can be analytically continued to the complex plane 
(at least to the domain $|z-\rho|\le \rho/8$; see Remark 10.1 from \cite{PaSh1}). We will denote such an extension by $H_2(z)$. 

Now consider the function 
$u(\bxi;\bx_0)$ 
introduced in \eqref{fourierU}. Let us fix the values of $(X_1,\dots,X_m,\tilde\Phi_1,\dots,\tilde\Phi_{K})$ and consider the restriction of
$u(\bxi;\bx_0)$ to 
$\BUps(r;X,\tilde\Phi)$
as a function of $r$; we call it then $\tilde u=\tilde u(r;\bx_0)\in \GH(X_1,\dots,X_m,\tilde\Phi_1,\dots,\tilde\Phi_{K})$. Formula \eqref{fourierU} shows that $\tilde u(r;\bx_0)$ admits the analytical extension into the complex plane; we will denote this extension by $\tilde u(z;\bx_0)$. 

From the above discussion (see in particular Corollary~\ref{cor:Udelta}), it follows that 
\bee\label{quadratic}
\bes
 &e_{\GV;p}(\la',\la'';H_2;U;\bx_0,\by_0)\\
 &=\int_{\tilde\Omega(\GV)} d\tilde X\int_{M_p}d\tilde\Phi\int\limits_a^b r^K\left(E([\lambda',\lambda''];H_2(r))\tilde u(r;\bx_0),\tilde u(r;\by_0)\right)dr+O(\rho_n^{-M}).
\end{split}  
\ene
Here,  $d\tilde\Phi$ is understood as \eqref{eq:surfaceelement} (see also \eqref{eq:etajdash}), $a:=\tau_1(\rho';\tilde X,\tilde\Phi)$ (in fact, we can decrease $a$ and the integral will not change) 
and $b:=\tau_q(\rho'';\tilde X,\tilde\Phi)$ (similarly, $b$ can be increased without changing the integral). 

Now, we notice that all functions $\tau_j$ introduced above belong to the space $T([\lambda',\lambda''])$  introduced in Definition \ref{Tau} (this  easily follows from \eqref{eq:newS1}). Let $\tilde\Gamma'$ be an appropriate contour around interval $[a,b]$, i.e. $\tilde\Gamma'$ be completely inside the domain of analyticity of the operator $H_2(r)$. 

\begin{lem}\label{spectralint}Let $f(z)$ and ${g(z)}$ be analytic so that $\tilde\Gamma'$ is also inside their domain of analyticity. Then the following formula holds
\begin{equation}\label{spectral}
\int\limits_a^b \left(E(H_2(r);[\lambda',\lambda''])f(r),g(r)\right)dr=\frac{1}{2\pi i}\int\limits_{\lambda'}^{\lambda''}\,d\mu\oint\limits_{\tilde\Gamma'}
\left((H_2(z)-\mu)^{-1}f(z),g(\bar z)\right)dz.
\end{equation}
\end{lem}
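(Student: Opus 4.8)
The plan is to prove \eqref{spectral} by reducing it to the purely scalar identity of Lemma~\ref{aux} via the spectral decomposition of the finite-dimensional operator family $H_2(r)$, followed by a residue computation. The key observation is that \eqref{spectral} is, after pairing with the spectral measure, a superposition of scalar statements to which Lemma~\ref{aux} applies directly.

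First I would recall that $H_2(r)$ acts in a fixed finite-dimensional Hilbert space and is monotone increasing in $r$ (this follows from \eqref{eq:nn1} and \eqref{eq:newS1}), so for fixed values of the transverse coordinates the eigenvalues $\lambda_j(H_2(r))$ are strictly increasing real-analytic functions of $r$ on the relevant interval, and each inverse function $\tau_j$ belongs to the class $T([\lambda',\lambda''])$ from Definition~\ref{Tau}. Writing the spectral projection as $E(H_2(r);[\lambda',\lambda''])=\sum_{j:\,\lambda_j(H_2(r))\in[\lambda',\lambda'']}P_j(r)$, where $P_j(r)$ is the spectral projection onto the $j$-th eigenspace, the left-hand side of \eqref{spectral} becomes a finite sum over $j$ of integrals $\int_a^b \mathbf{1}_{[\lambda',\lambda'']}(\lambda_j(H_2(r)))\,(P_j(r)f(r),g(r))\,dr$. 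For each $j$, the condition $\lambda_j(H_2(r))\in[\lambda',\lambda'']$ is equivalent to $r\in[\tau_j(\lambda'),\tau_j(\lambda'')]$ (taking the appropriate branch), so this integral equals $\int_{\tau_j(\lambda')}^{\tau_j(\lambda'')}(P_j(r)f(r),g(r))\,dr$.

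Next I would handle the right-hand side. Inside the contour $\tilde\Gamma'$, the resolvent $(H_2(z)-\mu)^{-1}$ has poles precisely where $\mu$ is an eigenvalue of $H_2(z)$, i.e.\ (for real $\mu$ in $[\lambda',\lambda'']$) at the points $z=\tau_j(\mu)$; near such a pole the residue of $(H_2(z)-\mu)^{-1}$ in $z$ is $-(\lambda_j'(\tau_j(\mu)))^{-1}P_j(\tau_j(\mu))=-(\tau_j)'(\mu)P_j(\tau_j(\mu))$, using that $\lambda_j\circ\tau_j=\mathrm{id}$. Applying the residue theorem in $z$ to the inner contour integral, the right-hand side of \eqref{spectral} becomes $-\sum_j\int_{\lambda'}^{\lambda''}(\tau_j)'(\mu)\,(P_j(\tau_j(\mu))f(\tau_j(\mu)),g(\overline{\tau_j(\mu)}))\,d\mu$; since $\mu$ is real and $\tau_j$ is real-valued on $[\lambda',\lambda'']$, we have $\overline{\tau_j(\mu)}=\tau_j(\mu)$, and the change of variables $r=\tau_j(\mu)$, $dr=(\tau_j)'(\mu)\,d\mu$ turns each summand into $-\int_{\tau_j(\lambda')}^{\tau_j(\lambda'')}(P_j(r)f(r),g(r))\,(-1)\,dr$ after being careful with the orientation of the contour and the sign conventions; matching with the left-hand side computed above completes the proof. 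Alternatively, and perhaps more cleanly, once everything is reduced to the scalar level one can invoke Lemma~\ref{aux} verbatim with $\tau=\tau_j$ and with $f$ replaced by the analytic scalar function $r\mapsto(P_j(r)f(r),g(r))$, matching the left and right sides of \eqref{intaux} term by term.

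\textbf{The main obstacle} I expect is bookkeeping rather than conceptual: one must verify that all the contours can be consistently arranged (shrinking $\tilde\Gamma'$ by analyticity so that its image under each $\tau_j^{-1}$ is a legitimate contour around $[\lambda',\lambda'']$, as in the proof of Lemma~\ref{aux}), that the eigenvalue branches $\lambda_j(H_2(z))$ and the projections $P_j(z)$ extend holomorphically to a neighbourhood of $[a,b]$ without branch crossings inside the region bounded by $\tilde\Gamma'$ (which is where the strict monotonicity and the analytic continuation of $H_2(z)$ to $|z-\rho|\le\rho/8$ are used), and that the pairing $(\,\cdot\,,g(\bar z))$ is handled with the correct conjugation so that the sesquilinearity does not introduce spurious factors. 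The orientation and sign of the $\frac{1}{2\pi i}$ prefactor must be tracked against the orientation implicit in Lemma~\ref{aux}. None of this is deep, but it is the part where an error is most likely to creep in.
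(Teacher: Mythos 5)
Your proposal is correct and follows essentially the paper's own route: the paper likewise decomposes over the analytic eigenvalue branches (with inverses $\tau_j\in T([\lambda',\lambda''])$), isolates the rank-one singular part $\frac{(E_j(H_2(r))f(r),g(r))}{\tau_j^{-1}(r)-\mu}$ plus an analytic remainder, and applies Lemma~\ref{aux} branch by branch (your ``cleaner alternative'' is literally the paper's proof), handling coinciding or crossing branches by continuity and by subdividing $[\lambda',\lambda'']$ at the intersection points. One bookkeeping remark: the residue of $\bigl((H_2(z)-\mu)^{-1}f(z),g(\bar z)\bigr)$ at $z=\tau_j(\mu)$ is $+\tau_j'(\mu)\bigl(P_j(\tau_j(\mu))f(\tau_j(\mu)),g(\tau_j(\mu))\bigr)$, and the increasing substitution $r=\tau_j(\mu)$ introduces no orientation flip, so the two compensating minus signs in your computation are both spurious; done carefully the signs match \eqref{spectral} directly.
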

\begin{proof}
First, consider the case when the intervals $[\tau_j(\lambda'),\tau_j(\lambda'')]$ do not intersect for all $j$. It is allowed that several eigenvalues coincide identically and thus are represented by the same function  $\tau^{-1}_j(r)$. Let $\Gamma$ be a contour around $[\la',\la'']$ which is inside the domain of analyticity of all the functions involved. By $\tilde\Gamma_j$ we denote a contour around interval $[\tau_j(\lambda'),\tau_j(\lambda'')]$ so that $\tilde\Gamma_j$ do not intersect for all $j$.  Then we have
$$
\int\limits_{[a,b]}^{\oplus}\, E(\la;\tilde H_2(r))\,dr=\int\limits_{[a,b]}^{\oplus}\,\sum\limits_{j:\, \tau_j(\lambda)>r}E_j(\tilde H_2(r))\,dr=\sum\limits_j
\int\limits_{[a,\tau_j(\lambda)]}^{\oplus}
\,E_j(\tilde H_2(r))\,dr.
$$
Here, $E_j(\tilde H_2(r))$ is the orthogonal projector onto the span of $h_{j,\BUps(\bxi)}$ -- the eigenfunction of $\tilde H_2(r)$ corresponding to  $\la_j(\BUps(\bxi))$ -- its eigenvalue number $j$. Recall that  operators $\tilde H_2(r)$ and $E_j(\tilde H_2(r))$ act in the space $\GV_r$. 
Thus,
\bee
\bes
\int\limits_{[a,b]}^{\oplus}\,  E([\lambda',\lambda''];\tilde H_2(r))\,dr&=\sum\limits_j
\int\limits_{[\tau_j(\lambda'),\tau_j(\lambda'')]}^{\oplus}
\,E_j(\tilde H_2(r))\,dr\\
&=
-\frac{1}{2\pi i}\sum\limits_j\int\limits_{[\tau_j(\lambda'),\tau_j(\lambda'')]}^{\oplus}\,dr\oint\limits_\Gamma\frac{d\mu}{\tilde H_2(r)-\mu}.
\end{split}
\ene
 We use the singular decomposition inside $\tilde\Gamma_j$ for $r$ and inside $\Gamma$ for $\mu$:
\begin{equation*}
\left((H_2(r)-\mu)^{-1}f(r),g(r)\right)=\frac{\left(E_j(H_2(r))f(r),g(r)\right)}{\tau_j^{-1}(r)-\mu}+G_j(r,\mu)
\end{equation*}
with analytic function $G_j$. Now, Lemma~\ref{aux} leads to the following identity.
\begin{equation*}
\begin{split}
\int\limits_a^b \left(E(H_2(r);[\lambda',\lambda''])f(r),g(r)\right)dr=\frac{1}{2\pi i}\int\limits_{\lambda'}^{\lambda''}\,d\mu\oint
\limits_{\cup_j\tilde\Gamma_j}\left((H_2(z)-\mu)^{-1}f(z),g(\bar z)\right)dz
=\cr \frac{1}{2\pi i}\int\limits_{\lambda'}^{\lambda''}\,d\mu\oint
\limits_{\tilde\Gamma'}\left((H_2(z)-\mu)^{-1}f(z),g(\bar z)\right)dz.
\end{split}
\end{equation*}
Next, if  intervals $[\tau_j(\lambda'),\tau_j(\lambda'')]$ can touch each other at boundary points only then formula \eqref{spectral} can be easily justified by continuity from the intervals $[\lambda'+\epsilon,\lambda''-\epsilon]$ (note that only construction of the auxiliary contours $\tilde{\Gamma}_j$ was based on the non-intersection assumption). Finally, to prove \eqref{spectral} in the general case it is enough to divide interval $[\lambda',\lambda'']$ into finitely many subintervals at pre-images of the points of intersection of different functions $\tau_j$ and combine the results for each subinterval (note that for each  subinterval the contour $\tilde\Gamma'$ can, by analyticity, be extended  to the contour around $[a,b]$).
\end{proof}

Now we want to extend the contour of integration in the RHS of \eqref{spectral}. Denote by $\hat\Gamma$ the contour of radius $\rho_n/8$ around $\rho'=(\la')^{1/2}$. The following result was proved in \cite{MoPaSh}
\bel
Let $0<\rho''-\rho'<\rho_n/8$. Assume that the operator $(H(z)-\mu)$ is not invertible, where $\mu\in[\la',\la'']$ and $z$ is inside $\hat\Gamma$. 
Then $z$ is real (and belongs to $[a,b]$, where $a=\tau_1(\rho')$ and $b=\tau_q(\rho'')$).  
\enl
\bec We have:
\bee\label{quadratic1}
\bes
 &e_{\GV;p}(\la',\la'';H_2;\bx_0,\by_0)\\
 &=\frac{1}{2\pi i}\int_{\tilde\Omega(\GV)} d\tilde X\int_{M_p}d\tilde\Phi
 \int\limits_{\lambda'}^{\lambda''}\,d\mu\oint\limits_{\hat\Gamma}
z^K \left((H_2(z)-\mu)^{-1}\tilde u(z;\bx_0),\tilde u(\bar z;\by_0)\right)dz+O(\rho_n^{-M}).
\end{split}
\ene
\enc
Now we carry on the calculations and notice that on the contour $\hat\Gamma$ we can extend the resolvent in the geometric series (recall that $H_2(z)=z^2I+S(z)$; we also denote $\mu=\rho^2$):
\begin{equation}\label{resolventS}
(H_2(z)-\mu)^{-1}=\sum\limits_{l=0}^\infty(-1)^l S^l(z)(z^2-\mu)^{-(l+1)},\ \ \ z\in\hat\Gamma.
\end{equation}
Therefore, using the Cauchy integral formula we have 
\bee\label{eq:Cauchy}
\bes
&\frac{1}{2\pi i}\oint\limits_{\hat\Gamma}
z^K \left((H_2(z)-\mu)^{-1}\tilde u(z;\bx_0),\tilde u(\bar z;\by_0)\right)dz\\
&= \frac{1}{2\pi i}\sum\limits_{l=0}^\infty
\oint\limits_{\hat\Gamma}
z^K(-1)^l(z-\rho)^{-(l+1)}(z+\rho)^{-(l+1)}\left( S^l(z)
\tilde u(z;\bx_0),\tilde u(\bar z;\by_0)\right)dz\\
&=\sum\limits_{l=0}^\infty
\frac{(-1)^l}{l!}\frac{d^l}{dr^l}[r^K(r+\rho)^{-(l+1)}\left( S^l(r)
\tilde u(r;\bx_0),\tilde u(r;\by_0)\right)]\bigm|_{r=\rho}.
\end{split}
\ene
Denote the RHS of \eqref{eq:Cauchy} by $\tilde J=\tilde J(\rho;\tilde X;\tilde\Phi;\bx_0,\by_0)$. 
Using explicit formulae \eqref{symbol} and \eqref{fourierU}, we deduce that $\tilde J$ is a sum of the terms of the following form:
\bee\label{eq:form}
\sum_{\boldeta\in\BUps(\bxi)} \sum_{\bnu\in\BUps(\bxi)}e^{-i\langle\boldeta,\bx_0\rangle}e^{i\langle\bnu,\by_0\rangle}\rho^{m}\tilde  f_1(X(\boldeta),X(\bnu))f_2(\Phi)\tilde f_3(X(\boldeta),X(\bnu);
\rho;\Phi),
\ene
$m\leq K-1\leq d-2$. Here, function $\tilde f_1$ is a smooth function of $X$ 
coordinates of $\boldeta$ and $\bnu$ only. It consists of contributions from the cut-off functions $\varphi_{\bth}$ with $\bth\in\GV$ and from the terms in \eqref{symbol}, \eqref{indsconst} corresponding
to $\bth'_j, \bth_j''\in\GV$. It also takes care of the inner products  
$\left( S^l(\rho)\tilde u(\rho;\bx_0),\tilde u(\rho;\by_0)\right)$; these inner products are the reason why we need double sum in \eqref{eq:form}: we have expressed this inner product as a sum of all matrix elements of  
$S^l(\rho)$ times the corresponding elements of the two vectors. 
The function $f_2(\Phi)$ is a product of powers of $\{\sin\Phi_q\}$. This
function 
comes from differentiating
\eqref{eq:innerproduct1} and $e^{-i\langle\bxi,\bx_0-\by_0\rangle}$ with respect to $r$. Finally, $\tilde f_3$ is of the following form:
\bee\label{eq:f3}
\tilde f_3(X;t;\Phi)=\prod_{t=1}^T
(l_t
+\rho\sum_{q} b_q^t\sin(\Phi_q))^{-k_t
}.
\ene
This function corresponds to the negative powers of inner products $\lu\bxi,\bth_t\ru$ given by
Lemma \ref{lem:products}, part (ii). Here, $\{b_q^t\}$ are coefficients in the decomposition $(\bth_t)_{\GV^{\perp}}=\sum_q b_q^t\tilde \bmu_q$;
recall that these numbers are all of the same sign and satisfy \eqref{eq:n10}. Without loss of
generality we will assume that all $b_q^t$ are non-negative. 
The number $l_t=l(b_1^t,\dots,b_{K+1}^t):=\lu X,(\bth_t)_{\GV}\ru+L_{m+1}\sum_{q}b_q^t$ satisfies
$l_t\asymp  \rho_n^{\al_{m+1}}$, since our
assumptions imply $|\lu X,\bth_{\GV}\ru|\ll\rho_n^{\al_m}$. This number depends on $X$,
but not on $\Phi$ or $\rho$. The number $k_t=k(b_1^t,\dots,b_{K+1}^t)$ is positive, integer, and independent of $\bxi$. 

By denoting $\bth:=\boldeta-\bnu\in (\BUps(\bxi)-\BUps(\bxi))$, we can re-write \eqref{eq:form} like this:
\bee\label{eq:form3}
\sum_{\bth\in(\BUps(\bxi)-\BUps(\bxi))}\sum_{\boldeta\in\BUps(\bxi)} e^{-i\langle\boldeta,\bx_0-\by_0\rangle}e^{-i\langle\bth,\by_0\rangle}\rho^{m}\tilde  f_1(X(\boldeta),X(\boldeta-\bth))f_2(\Phi)\tilde f_3(X(\boldeta),X(\boldeta-\bth);
\rho;\Phi).
\ene

Recall that \eqref{quadratic1} involves the integration against $d\tilde X$ (where $\tilde X$ is a class of equivalence with respect to $\leftrightarrow$). If we integrate against $dX$ instead, we can get rid of the summation over different $\boldeta$ in \eqref{eq:form3}: 
\bee\label{quadratic2}
\bes
 &e_{\GV;p}(\la',\la'';H_2;U;\bx_0,\by_0)\\
 &=\int_{\Omega(\GV)} dX\int_{M_p}d\tilde\Phi
 \int\limits_{\lambda'}^{\lambda''}\,\rho d\rho J(\rho;\tilde X;\tilde\Phi;\bx_0,\by_0)+O(\rho_n^{-M}),
\end{split}
\ene
where $J$ is the sum of the terms of the following form:
\bee\label{eq:form2} 
\sum_{\bth\in(\BUps(\bxi)-\BUps(\bxi))} e^{-i\langle\bxi,\bx_0-\by_0\rangle}\rho^{m}f_1(X(\bxi);\bth)f_2(\Phi)f_3(X(\bxi);\rho;\Phi),\ \ \ m\leq K-1\leq d-2.
\ene
with $f_j$ satisfying the same properties as before. 
Note that $\bxi$ is the point with coordinates $(X,\rho,\tilde\Phi)$. 

We have 
\bee\label{eq:in2'}
\langle\bxi,\bx_0-\by_0\rangle=\langle X, (\bx_0-\by_0)_\GV\rangle+\langle \ba, (\bx_0-\by_0)_{\GV^\perp}\rangle+ \rho\sum_q s_q\sin\Phi_q,
\ene
where constants $s_q$ are coefficients in the decomposition $(\bx_0-\by_0)_{\GV^\perp}=\sum_{q}s_q\tilde\bmu_q(p)$.

Our objective is to compute the sum of the integrals
of \eqref{eq:form2} over various domains $\Bxi(\GV)_p$.  So, we need to integrate the functions of the form
\begin{equation}\label{eq:in2}
F_K:=\frac{(\sin\Phi_1)^{n_1}\dots (\sin\Phi_K)^{n_K}(\sin\Phi_{K+1})^{n_{K+1}}}{\prod_{t=1}^T (l_t+\rho\sum_{j=1}^{K+1} b_j^t \sin\Phi_j)^{k_t}}
e^{-i\rho\sum_q s_q\sin\Phi_q}.
\end{equation}

\subsection{The diagonal case}
On the diagonal $\bx_0=\by_0$ the integral $\hat J_K:=\int\limits_{M_p}F_Kd\tilde \Phi$ is exactly of the same form as in \cite{PaSh1} and we can directly apply the results from \cite{PaSh1} , Lemma 10.4:
\bel\label{lem:integral1}
We have:
\begin{equation}\label{eq:JK3}
\hat J_K=
\sum_{q=0}^K(\ln({\rho}))^q\sum_{p=0}^\infty
d(p,q;n){\rho}^{-p},
\ene
where
\bee\label{eq:estimatee1}
|d(p,q;n)|\ll\rho_n^{2p/3}\rho_n^{-Q\beta},
\end{equation}
where $Q:=\sum_t k_t$. These estimates are uniform in $\CA_n$.\footnote{We recall that coefficients $d(p,q;n)$ depend on $X$ via $l_t$.}

\enl
From Lemma~\ref{lem:integral1} we immediately obtain (cf. \cite{PaSh1}) the asymptotic formula
\begin{equation}\label{almosthere}
\int_{\Omega(\GV)} dX\int_{M_p}d\tilde\Phi\,\rho J(\rho;\tilde X;\tilde\Phi;\bx_0,\by_0)=\sum_{q=0}^{d-1}(\ln({\rho}))^q\sum_{p=-d+1}^\infty
\hat d(p,q;n){\rho}^{-p},\ \ \ \rho>\rho_n^{2/3},
\end{equation}
where
\bee\label{almosthere1}
|\hat d(p,q;n)|\ll\rho_n^{2p/3+2(d-1)/3}.
\end{equation}
Now, to calculate \eqref{quadratic2} we integrate \eqref{almosthere} against $d\rho$. Then taking the summation over all $M_p$ and $\GV\in\CV$ (see \eqref{eq:tildee1}) we obtain
\begin{equation}\label{alreadyhere}
\bes
&e(\la'',\la';H_2;U;\bx_0,\bx_0)\\
&=
\sum_{q=0}^{d}(\ln({\lambda''}))^q\sum_{p=-d}^\infty
\tilde d(p,q;n){(\lambda'')}^{-p/2}-\sum_{q=0}^{d}(\ln({\lambda'}))^q\sum_{p=-d}^\infty
\tilde d(p,q;n){(\lambda')}^{-p/2},
\end{split}
\end{equation}
where
\bee\label{alreadyhere1}
|\tilde d(p,q;n)|\ll\rho_n^{2p/3+2d/3}.
\end{equation}

Recall that for $[\lambda',\lambda'']\subset I_n$ formula \eqref{alreadyhere} describes (up to an error $O(\rho_n^{-M})$) the kernel  $e([\lambda',\lambda''];H_2;U;\bx_0,\bx_0)$ introduced in \eqref{eq:tildee}. Strictly speaking, so far we have proved \eqref{alreadyhere} only assuming that $\rho''-\rho'<\rho_n/8$, but in the general case we cover the interval $[\lambda',\lambda'']$ by several (at most 16) smaller intervals where this assumption is satisfied, and then sum formulae \eqref{alreadyhere} for each of these intervals. 

Therefore, for $\lambda^{1/2}\in I_n$ we have 
\begin{equation}\label{I_nasymp}
\begin{split}&
e([-\infty,\lambda];H_2;U;\bx_0,\bx_0)=
e([\rho_n^2,\lambda];H_2;U;\bx_0,\bx_0)+e([-\infty,\rho_n^2];H_2;U;\bx_0,\bx_0)=\cr & 
\sum_{q=0}^{d}(\ln({\lambda}))^q\sum_{p=-d}^\infty
\tilde d(p,q;n){\lambda}^{-p/2}+c_n+O(\rho_n^{-M}),\ \ \ \lambda^{1/2}\in I_n,
\end{split}
\end{equation}
where 
\bee\label{eq:cn}
c_n:=e([-\infty,\rho_n^2];H_2;U;\bx_0,\bx_0)-\sum_{q=0}^{d}(\ln({\rho_n^2}))^q\sum_{p=-d}^\infty
\tilde a(p,q;n){\rho_n}^{-p}.
\ene
Obviously, $|c_n|<<\rho_n^{d+1}$. Using \eqref{eq:e4}, we obtain the following asymptotic expansion for $e([-\infty,\lambda];H;\bx_0,\bx_0)$:
\begin{equation}\label{I_nasymp1}
\begin{split}&
e([-\infty,\lambda];H;\bx_0,\bx_0)=
\sum_{q=0}^{d}(\ln({\lambda}))^q\sum_{p=-d}^{\infty}
a(p,q;n){\lambda}^{-p/2}+O(\rho_n^{-M})=\cr & 
\sum_{q=0}^{d}(\ln({\lambda}))^q\sum_{p=-d}^{6M}
a(p,q;n){\lambda}^{-p/2}+O(\rho_n^{-M}),\ \ \ \lambda^{1/2}\in I_n,
\end{split}
\end{equation}
where
\bee\label{I_nasymp2}
|a(p,q;n)(\bx_0)|\ll\rho_n^{2p/3+d+1},
\end{equation}
and $O$-term here does not depend on $n$ (though it can depend on $M$). In fact, $a(0,0;n)=\tilde d(0,0;n)+c_n$ and $a(p,q;n)=\tilde d(p,q;n)$ for $|p|+q\not=0$. It is also not hard to see that the estimates are uniform in $\bx_0\in\R^d$. This proves Lemma~\ref{main_lem}.
\ber
The explicit form of the principal term in \eqref{eq:main_lem1} is just a consequence of the a priori fact that $N(\rho^2;\bx)=C_d\rho^d(1+o(1))$ (see, e.g., \cite{AgKa}). 
\enr

\subsection{The off-diagonal case}

Now we assume that $\bx_0\ne\by_0$. Denote 
$\bn:=\frac{\bx_0-\by_0}{|\bx_0-\by_0|}$. 
As we have stated in the introduction, we will consider only the (generic) case when $\bx_0-\by_0$ is not orthogonal to any of the vectors $\bth\in \Bth_{\infty}$, so 
in particular $\bx_0-\by_0$ is not orthogonal to any of the vectors  $\bth\in \Bth_{\tilde k}$. For large $\rho$ this implies that the vector 
$\rho\bn$ belongs to the non-resonant region $\Bxi(\GX)$. What we plan to do is, essentially, computing integrals of the form \eqref{eq:form3} using the stationary phase approach. All the computations are quite standard; however, since we are using rather special set of coordinates, we have to be careful when introducing the partition of unity, etc. Therefore, we  give some explanations of what we do here, but do not calculate everything in detail. 

We start by introducing the partition of unity dictated by our coordinates in $\CA$. We start by treating the resonant regions of the lowest rank, i.e. computing the integrals over $\Bxi(\GX)$. So, let us fix the component 
\bee
\Bxi(\GX)_p = \{\bxi\in\R^d,\ \ \&\ \forall j \ \lu\bxi,\tilde\bmu_j(p)\ru > L_{1}\}
\ene
for some vectors $\{\tilde\bmu_j(p)\}_{j=1}^{d-1}$. 
Next, for each vector $\bth=\tilde\bmu_j(p)$, $j=1,\dots,d-1$  
we introduce three functions 
$e^k_{\bth}=e^k_{\bth}(\bxi)$, $\bxi\in\CA$, $k=1,2,3$, with the following properties:

1. All functions $e^k_{\bth}$ are smooth and satisfy $0\leq e^k_{\bth}\leq 1$ and $\sum_{k=1}^3e^k_{\bth}=1$ in $\CA$.

2. $e^1_{\bth}$ depends only on the projection 
$P_{\bth}(\bxi):=\langle\bxi,\bth\rangle/|\bth|$; $e^1_{\bth}$ equals zero when $P_{\bth}(\bxi)>\rho_n^{\al_1}$ and $e^1_{\bth}$ equals one when $P_{\bth}(\bxi)<\frac12\rho_n^{\al_1}$.  

3. $e^3_{\bth}$ depends only on the angle $\Phi_j:=\frac{\pi}{2}-\phi(\bxi-\ba,\tilde\bmu_j(p))$; $e^3_{\bth}$ equals zero when $P_{\bth}(\bxi)<\rho_n^{\al_1}$ and $e^3_{\bth}$ equals one when $P_{\bth}(\bxi)>2\rho_n^{\al_1}$.

4. All partial derivatives of these functions satisfy 
\bee\label{eq:partial_estimates}
|\BD^\bs_{\bxi}e^k_{\bth}|\ll\rho_n^{-|\bs|\beta}.  
\ene

It is a simple (but quite tedious) exercise to check that such a partition exists. Now we consider the contribution to \eqref{quadratic2} when we multiply the integrand by $\prod_{j=1}^{d-1} e^3_{\bth}$. 

Case 1. $\rho\bn\not\in \Bxi(\GX)_p$. Then the contribution to the integral \eqref{quadratic2} equals (see \eqref{eq:in2})
\begin{equation}\label{twod1}
\int_{M_p} \frac{(\sin\Phi_1)^{n_1}\dots (\sin\Phi_K)^{n_K}(\sin\Phi_{K+1})^{n_{K+1}}\,d\tilde\Phi}{\prod_{t=1}^T (l_t+\rho\sum_{j=1}^{K+1} b_j^t \sin\Phi_j)^{k_t}}
e(\tilde\Phi) e^{\pm i\rho|\bx_0-\by_0|f(\tilde\Phi)}.
\end{equation}
Here, $f$ is smooth (in fact analytic) in $\tilde\Phi$, and  $\nabla f$ is never equal to zero inside $M_p$. The best way to see this is to use the original form of $f$ (cf. \eqref{eq:in2} and \eqref{eq:in2'}). The function $e$  is a smooth cut-off with support strictly inside $M_p$ satisfying 
\bee\label{eq:partial_estimates1}
|\BD^\bs_{\tilde\Phi}e|\ll\rho_n^{|\bs|(1-\beta)} 
\ene
(these estimates follow from \eqref{eq:partial_estimates} after rescaling). 
It is easy to see that each integration by parts (we integrate the exponential and differentiate other factors along $(\bx_0-\by_0)$) improves the estimate of the integral by at least $\rho_n^{-\beta}$. This shows that these integrals are $O(\rho_n^{-M})$.

Case 2. $\rho\bn\in \Bxi(\GX)_p$. Then the contribution to the integral \eqref{quadratic2} is again given by \eqref{twod1}, the only difference being that now there is a point $c\in M_p$ such that $\nabla f(c)=0$, and 
the Hessian of $f$ at $c$ is non-degenerated (again see \eqref{eq:in2} and \eqref{eq:in2'}). Here we need to be just a little bit more careful. Let $\tilde e$ be an additional cut-off function such that

1. $\tilde e$ is smooth and $0\leq\tilde e\leq1$.

2. $\tilde e$ depends only on $\tilde\Phi$; $\tilde e$ equals one in a neighborhood of the point $c$ and $\tilde e$ equals zero in a neighborhood of the boundary of $M_p$.

3. All partial derivatives of $\tilde e$ satisfy 
\bee\label{eq:partial_estimates'}
|\BD^\bs_{\tilde\Phi} \tilde e|\ll C_{|\bs|}.  
\ene

It is not hard to see that such function exists. Then we split \eqref{twod1} into two integrals. The integral with additional factor $(1-\tilde e)$ can be estimated as in Case 1. For the integral with additional factor $\tilde e$ we notice that now all sums $\sum_{j=1}^{K+1} b_j^t \sin\Phi_j$ are uniformly separated away from zero by some constant which depends on $\tilde k$ only. Thus, we can use geometric progression for denominators: 
$$
(l_t+\rho\sum_{j=1}^{K+1} b_j^t \sin\Phi_j)^{-1}=\sum\limits_{s=0}^\infty\rho^{-s-1}\frac{l_t^s}{(\sum_{j=1}^{K+1} b_j^t \sin\Phi_j)^{s+1}}.
$$
This leads us to the standard stationary phase integrals. As a result, we obtain the following asymptotics (see \cite{Ho1}, Theorem 7.7.5):
\begin{equation}\label{asymp}
\bes
&e^{i\rho''|\bx_0-\by_0|}\sum\limits_{p=0}^{\infty}c_+(p;n)(\rho'')^{-p-(d-1)/2}+e^{-i\rho''|\bx_0-\by_0|}\sum\limits_{p=0}^{\infty}c_-(p;n)(\rho'')^{-p-(d-1)/2}\\
-&
e^{i\rho'|\bx_0-\by_0|}\sum\limits_{p=0}^{\infty}c_+(p;n)(\rho')^{-p-(d-1)/2}-e^{-i\rho'|\bx_0-\by_0|}\sum\limits_{p=0}^{\infty}c_-(p;n)(\rho')^{-p-(d-1)/2}.
\end{split}
\end{equation}
For coefficients $c_\pm(p;n)$ we have the estimate
\begin{equation}\label{cpm}
|c_\pm(p;n)|\ll \rho_n^{p/2}
\end{equation}
which follows from the bound on $l_t$. As before, these estimates are uniform in $X$.

This covers the case of integrals over the non-resonant regions. The rest of the integration takes place over the resonant zones. Strictly speaking, those resonant zones are twice the width of the resonant zones we had before (since the support of $e_2$ can spread to the region of width $2\rho_n^{\al_1}$, whereas the width of our zones was $\rho_n^{\al_1}$). 
Nevertheless, we can extend the coordinates $X$ to these wider resonant zones. 

Consider now the contribution to \eqref{quadratic2} from a resonant partition function; by employing a further partition of unity, we can assume that the integration takes place inside one resonant region $\Bxi(\GV)_p$ and therefore the integral has the following form (we assume the 
$\Phi$ coordinates to be fixed and consider only the integral in $X$  
variables):
\begin{equation}\label{twod2}
e^{\pm i\langle\bxi,(\bx_0-\by_0)_{\GV^\perp}\rangle}\int_{\Om(\GV)} \frac{e(\rho,X)F(X) \,dX}{\prod_{t=1}^T (l_t(X)+\rho \tilde b^t)^{k_t}} e^{\pm i\langle X,(\bx_0-\by_0)_\GV\rangle}.
\end{equation}
Here, $F(X)$ is a smooth function with bounded derivatives (the bound can depend only on $\tilde k$) and $e (\rho,X)$ is a smooth cut-off  that satisfies the estimates
\bee\label{eq:partial_estimates3}
|\BD^\bs_{X}e|\ll\rho_n^{-|\bs|\beta}.  
\ene
Finally (see above), $l_t (X):=\lu X,(\bth_t)_{\GV}\ru+L_{m+1}b^t>>\rho_n^\beta$. Again, each integration by parts (integrating the exponential and differentiating other factors along $(\bx_0-\by_0)_\GV$) improves the estimate by at least $\rho_n^{-\beta}$, and thus the corresponding contribution is $O(\rho_n^{-M})$.

Now, integrating \eqref{asymp} over $X$ and $\rho$ and repeating the arguments from the diagonal case (see in particular \eqref{quadratic2} and \eqref{eq:form2}) we obtain the following asymptotic expansion for $e([-\infty,\lambda];H;\bx_0,\by_0)$ (here we also use that $e([-\infty,\lambda];H;\bx_0,\by_0)$ is real-valued):
\begin{equation}\label{I_nasymp1off}
\begin{split}&
e([-\infty,\lambda];H;\bx_0,\by_0)=
\cos(\lambda^{1/2}|\bx_0-\by_0|)\sum\limits_{p=-d+1}^{4M}\hat a(p;n)\lambda^{-p/2-(d-1)/4}+\cr & \sin(\lambda^{1/2}|\bx_0-\by_0|)\sum\limits_{p=-d+1}^{4M}\check a(p;n)\lambda^{-p/2-(d-1)/4}+A_0(n)+O(\rho_n^{-M}),\ \ \ \lambda^{1/2}\in I_n.
\end{split}
\end{equation}
Here, the real-valued coefficients $\hat a(p;n)=\hat a(p;n)(\bx_0,\by_0)$,  $\check a(p;n)=\check a(p;n)(\bx_0,\by_0)$ satisfy
\bee\label{I_nasymp2off}
|\hat a(p;n)|+|\check a(p;n)|\ll\rho_n^{p/2+d/2},
\end{equation}
and the $O$-term does not depend on $n$ (though it can depend on $M$).
The constant term $A_0(n)=A_0(n;\bx_0,\by_0)$ plays the same role as the constant term $c_n$ (see \eqref{eq:cn}) in the diagonal case:
\bee\label{eq:A0}
\bes
&A_0(n):=e([-\infty,\rho_n^2];H_2;U;\bx_0,\by_0)-\cos(\rho_n|\bx_0-\by_0|)\sum\limits_{p=-d+1}^{4M}\hat a(p;n)\rho_n^{-p-(d-1)/2}
\cr & -\sin(\rho_n|\bx_0-\by_0|)\sum\limits_{p=-d+1}^{4M}\check a(p;n)\rho_n^{-p-(d-1)/2}.
\end{split}
\ene
Obviously, we have 
\bee
|A_0(n)|\ll\rho_n^d.
\ene

 This proves Lemma~\ref{main_lemoff}.


%

%


\begin{thebibliography}{99}

\bibitem{AgKa} S. Agmon, Y. Kannai, 
\emph{On the asymptotic behavoir of spectral functions and resolvant kernels of elliptic operators}, 
Israel J. Math., \textbf{5}(1967), 1--30. 


\bibitem{Bh} R. Bhatia, \emph{Matrix Analysis}, Springer, 1997.

\bibitem{Ca} T. Carleman, \emph{Propri\'et\`es asymptotiques des fonctions fondamentales des membranes vibrantes}, C.R. du 8 \`eme Congr\`es Math. Scand. Stockholm 1934 (Lund 1935) pp. 34--44.

\bibitem{Gar} L. Gording,\emph{ On the asymptotic distribution of the eigenvalues and eigenfunctions of elliptic 
differential operators}, Math. Scand. \textbf{1} (1953), 237--255. 

\bibitem{HelMoh} B. Helffer, A. Mohamed, \emph{Asymptotics of the density of states for the
Schr\"odinger operator with periodic electric potential}, Duke Math. J. \textbf{92} (1998), 1--60.

\bibitem{HitPol} M. Hitrik, I. Polterovich, \emph{Regularized traces and Taylor expansions for the heat semigroup},  J. London Math. Soc.   \textbf{68}(2)  (2003),  402--418.

\bibitem{HitPol1} M. Hitrik, I. Polterovich, \emph{Resolvent expansions and trace regularizations for Schr\"odinger operators}, Advances in Differential Equations and Mathematical Physics, Contemporary Mathematics, American Mathematical Society, 2003.

\bibitem{Ho} L. H\"ormander, 
\emph{On the Riesz means of spectral functions and eigenfunction expansions for elliptic differential operators}, 1969 Some Recent Advances in the Basic Sciences, Vol. 2 (Proc. Annual Sci. Conf., Belfer Grad. School Sci., Yeshiva Univ., New York, 1965--1966) pp. 155--202 Belfer Graduate School of Science, Yeshiva Univ., New York.

\bibitem{Ho1} L.H\"ormander, \emph{The analysis of linear partial differential operators. I. Distribution theory and Fourier analysis}, Second edition. Grundlehren der Mathematischen Wissenschaften [Fundamental Principles of Mathematical Sciences], 256. Springer-Verlag, Berlin, 1990

\bibitem{Kan}Y. Kannai, \emph{Off diagonal short time asymptotics for fundamental solutions of diffusion equations}, Commun. Partial Differ. Equations \textbf{2} (1977), no. 8, 781--830.



\bibitem{Kar1} Yu. Karpeshina, \emph{Perturbation theory for the Schr\"odinger operator with a periodic potential},
Lecture Notes in Math., Vol. 1663, Springer Berlin 1997.

\bibitem{Kar} Yu. Karpeshina, \emph{On the density of states for the periodic Schr\"odinger operator},
Ark. Mat. \textbf{38} (2000), 111--137.


\bibitem{KorPush} E. Korotyaev and A. Pushnitski, \emph{On the High-Energy Asymptotics of the Integrated Density of States},
Bull. LMS \textbf{35} (2003), No. 6, 770--776.

\bibitem{Le} B. M. Levitan, \emph{On the asymptotic behavior of the spectral function and the eigenfunction 
expansion of self-adjoint differential equations of the second order II}, Izv. Akad. Nauk SSSR, 
Ser. Mat. \textbf{19} (1955), 33--58. 

\bibitem{Mol}S.A.  Molchanov, \emph{Diffusion processes and Riemannian geometry} (Russian), Uspehi Mat. Nauk \textbf{30} (1975), no. 1(181), 3--59.

\bibitem{MoPaSh} S. Morozov, L.Parnovski, R.Shterenberg, \emph{Complete asymptotic expansion of the integrated density of states of multidimensional almost-periodic pseudo-differential operators}, to appear in Annales H. Poincar\'e,  http://arxiv.org/abs/1204.1076





\bibitem{ParSht} L. Parnovski, R. Shterenberg, \emph{Asymptotic expansion of the integrated density of states of a two-dimensional periodic Schroedinger operator},
Inv.Math., \textbf{176}(2) (2009), 275--323.

\bibitem{PaSh1} L. Parnovski, R. Shterenberg, \emph{Complete asymptotic expansion of the integrated density of states of multidimensional almost-periodic Schr\"odinger operators}, Ann. of Math. (2) \textbf{176} (2012), no. 2, 1039--1096.

\bibitem{ParSob} L. Parnovski, A. Sobolev, \emph{Bethe-Sommerfeld conjecture for periodic operators with strong perturbations},  Invent.Math., \textbf{181}(3) (2010), 467 - 540.

\bibitem{PoSh}  G. Popov, M. Shubin, \emph{ Asymptotic expansion of the spectral function for second-order elliptic operators in Rn.} (Russian) Funktsional. Anal. i Prilozhen. \textbf{17} (1983), no. 3, 37--45.


\bibitem{Sa}Yu.G. Safarov, \emph{Asymptotics of a spectral function of a positive elliptic operator without a nontrapping condition} (Russian) Funktsional. Anal. i Prilozhen. \textbf{22} (1988), no. 3, 53--65, 96; translation in Funct. Anal. Appl. \textbf{22} (1988), no. 3, 213?-223 (1989).

\bibitem{Sav} A. V. Savin, \emph{Asymptotic expansion of the density of states
for one-dimensional Schr\"odinger and Dirac operators with
almost periodic and random potentials},
Sb. Nauchn. Tr. , I.F.T.P., Moscow (Russian), 1988.

\bibitem{ShuSche} D. Shenk and M. Shubin, \emph{Asymptotic expansion of the state density and the
spectral function of a Hill operator}, Math. USSR Sbornik \textbf{56} (1987), No. 2, 473--490.

\bibitem{Shu2} M.A.Shubin, \emph{Pseudodifferential almost-periodic operators and von Neumann algebras} (in Russian), Proceedings of the Moscow Mathematical Society, \textbf{35}(1976), 103--164. 

\bibitem{Shu0} M. Shubin, \emph{Almost periodic functions and partial differential operators},
Russian Math. Surveys \textbf{33}(2) (1978), 1--52.

\bibitem{Shu} M. Shubin, \emph{The spectral theory and the index of elliptic operators with almost periodic coefficients},
Russian Math. Surveys \textbf{34}(2) (1979), 109--157.



 \bibitem{Shu1} M.A. Shubin, \emph{Pseudodifferential operators and spectral theory}, Springer Series in Soviet Mathematics. Springer-Verlag, Berlin, 1987.



\bibitem{Si} B.Simon,  \emph{Schr\"odinger semigroups}, Bull. AMS, 
\textbf{7}(3)(1982), 447--526.

\bibitem{Skr} M. Skriganov,
\emph{Geometrical and arithmetical methods in the spectral theory
of the multi-dimensional periodic operators}, Proc. Steklov Math.
Inst., Vol. 171,   1984.

\bibitem{Sob} A. V. Sobolev, \emph{Asymptotics of the integrated density of states for periodic elliptic pseudo-differential operators in dimension one},
Rev. Mat. Iberoam. \textbf{22} (2006),  no. 1, 55--92.

\bibitem{Sob1} A. V. Sobolev, \emph{Integrated Density of States for the Periodic
Schr\"odinger Operator in Dimension Two}, Ann. Henri Poincar\'e \textbf{6} (2005), 31--84.

\bibitem{Va} B. Vainberg, \emph{ Asymptotic expansion of the spectral function of elliptic operators in $\R^n$},  J. Soviet Math., \textbf{47} (1989), no. 3, 2537--2546.


\end{thebibliography}
\end{document}